\documentclass[11pt]{article}
\usepackage{graphicx, fullpage, amsmath, amssymb} 
\usepackage{setspace}
\setstretch{1.1}
\onehalfspacing
\usepackage{natbib}
\usepackage{doi}

\usepackage{geometry}
\geometry{left=1in, right=1in, top=1in, bottom=1in}
\usepackage{microtype}
\usepackage{graphicx}
\usepackage{subfigure}
\usepackage{booktabs} 
\usepackage{algorithmic}
\usepackage[ruled,vlined]{algorithm2e}
\usepackage{balance}
\usepackage{hyperref}

\usepackage{mathtools}
\usepackage{bm}
\usepackage{amsthm, tikz, nicefrac}

\usepackage[capitalize,noabbrev]{cleveref}

\theoremstyle{plain}
\newtheorem{theorem}{Theorem}[section]

\newtheorem{lemma}[theorem]{Lemma}
\newtheorem{corollary}[theorem]{Corollary}
\theoremstyle{definition}

\theoremstyle{remark}

\newtheorem*{example*}{Example}

\newcommand{\R}{\mathbb{R}}
\newcommand{\E}{\mathbb{E}}
\renewcommand{\P}{\mathbb{P}}

\newcommand{\WPB}{\textsf{WPB}}

\newcommand{\AP}{\textsf{AP}}

\newcommand{\abne}{\mathcal{A}}
\newcommand{\pbne}{\mathcal{P}}
\newcommand{\babne}{\bm{\mathcal{A}}}
\newcommand{\bpbne}{\bm{\mathcal{P}}}
\newcommand{\abic}{{A}}
\newcommand{\pbic}{{P}}
\newcommand{\babic}{\bm{{A}}}
\newcommand{\bpbic}{\bm{{P}}}
\newcommand{\asc}{A^{\text{SC}}}
\newcommand{\basc}{\bm{A}^{\text{SC}}}
\newcommand{\bv}{\bm{v}}
\newcommand{\bb}{\bm{b}}

\newcommand{\one}{\mathbf{1}}

\newcommand{\Var}{\mathsf{Var}}
\newcommand{\Rev}{\mathsf{Rev}}

\date{}
\title{Revenue Variance Minimization: \\
Beyond First Price Auctions}

\author{
  \begin{tabular}{c@{\hspace{0.5in}}c@{\hspace{0.5in}}c}
    Marek Bojko\thanks{Department of Economics, Yale University, marek.bojko@yale.edu} & 
    Preston McAfee\thanks{Google Research, mcaf@google.com} & 
    Renato Paes Leme\thanks{Google Research, renatoppl@google.com} \\[12pt]
    \multicolumn{3}{c}{
      Balasubramanian Sivan\thanks{Google Research, balusivan@google.com} \qquad\qquad
      Sergei Vassilvitskii\thanks{Google Research, sergeiv@google.com}
    }
  \end{tabular}
}


\begin{document}

\maketitle

\begin{abstract}
We study revenue variance in the sale of $k$ homogeneous items to risk-neutral, unit-demand bidders with independent private values. Although the Revenue Equivalence Theorem implies that standard auctions generate the same expected revenue, the distribution of revenue differs across mechanisms. Prior work shows that, in single-item environments with ex-post individual rationality (IR), the first-price auction minimizes revenue variance. We show that this result is fragile. Under interim IR, the optimality of the first-price auction breaks down in asymmetric single-item settings, and we characterize the variance-minimizing mechanisms for any implementable allocation rule in this environment. In multi-item symmetric regular environments with interim IR, we construct a mechanism that implements the efficient allocation and guarantees constant revenue while maintaining non-negative payments. Under ex-post IR, we show that revenue variance can be reduced relative to winner-pays-bid formats by introducing negative correlations in payments. Nevertheless, we show that the variance ranking between the winner-pays-bid auction and the uniform $(k+1)$-st price auction is maintained in multi-unit settings.
\\
\vspace{0in}\\
\noindent{Keywords: Auctions, Multi-Unit Auctions, Revenue Variance, Revenue Equivalence, Winner-Pays-Bid} \\
\vspace{0in}\\
\noindent\textbf{JEL Codes: D44, D82, D81}
\end{abstract}

\newpage

\section{Introduction}

\subsection{Motivation}

Auctions are both empirically and theoretically important. They allocate a vast range of goods and services, from financial assets, natural resources, and spectrum licenses to internet advertising, real estate, art, and corporate control, and are also widely used in procurement, where governments purchase everything from infrastructure to everyday supplies. Theoretically, auctions serve as a canonical model of price formation and information aggregation, and advances in auction theory have led to influential market designs that govern hundreds of billions of dollars of economic activity.

There are four canonical auction formats \citep{MilgromWeber1982}. In the English auction, familiar from auction houses such as Sotheby’s, bidders are present in a (virtual) room, a provisional winner is identified at each price, and the price is successively increased until no bidder is willing to pay more. This format is commonly used for art and antiques, agricultural commodities, estate sales, and residential real estate in Australia, and is often what people have in mind when referring to an auction. The ``pay-as-bid,'' ``winner-pays-bid,'' or ``first-price sealed-bid'' auction is widely used by governments to sell oil leases and to procure goods and services. In this format, bidders independently submit sealed bids, the highest bidder wins, and pays her bid; no feedback is provided until the auction concludes. The Dutch auction, traditionally used in tulip markets, operates as a reverse English auction: the price starts high and is successively lowered until a bidder accepts it. Finally, the second-price sealed-bid, or Vickrey, auction is similar to the first-price auction in that bids are sealed and the highest bidder wins, but the price paid is the second-highest bid rather than the highest bid. Awareness of the Vickrey auction in the economics literature dates back to \citet{Vickrey1961}, although stamp dealers employed this format in mail-based auctions as early as 1893 \citep{Lucking-Reiley2000}. The Vickrey mechanism is also used in eBay auctions and in some internet advertising auctions, where the winner pays the second-highest bid (plus a small increment). The appeal of this format for stamp dealers was that it replicated the outcome of an English auction without requiring bidders to be present simultaneously.

Under the standard auction-theoretic framework with independent private values that are identically distributed across bidders, all four auction formats yield the same expected revenue to the seller, a consequence of the Revenue Equivalence Theorem \citep{Holt1980,HarrisRaviv1981,Myerson1981,RileySamuelson1981,Vickrey1961,MilgromWeber1982}. Thus, within the canonical model, there is no revenue-based rationale for the seller to prefer one auction format over another.

Nonetheless, expected revenue is only one moment of the revenue distribution. Auctioneers, ranging from governments to digital platforms, often operate under strict budget and regulatory constraints where revenue risk carries tangible economic costs. For governments, high variance jeopardizes the funding of public expenditures. Similarly, firms managing repeated auctions, such as those for advertising or cloud computing, may seek to minimize cashflow volatility.

Canonical auction formats may differ significantly in the distribution of revenue. Thus, even though all of these formats achieve the same expected revenue, the ex-post payment rule becomes a tool for risk management.

\citet{waehrer1998auction} establish that for single-item auctions under ex-post participation constraints, among efficient transaction mechanisms, winner-pays-bid format minimize revenue variance and, more generally, any convex risk measure. In particular, they minimize the variance of the seller’s revenue and profits, as well as the variance of buyers’ profits. As a result, a risk-averse seller may prefer the winner-pays-bid auction to alternative formats, potentially justifying their empirical prevalence.\footnote{Other justifications for the use of winner-pays-bid formats have been suggested in the literature, including reduced scope for within-auction punishment that can sustain collusion in open formats \citep{Milgrom1987} and the absence of certain pathological equilibria that can arise in second-price auctions. However, these considerations do not appear decisive empirically, as collusion has also been documented in repeated winner-pays-bid settings \citep{Pesendorfer2000}. Other reasons include credibility \citep{akbarpour2020credible}. On the other hand, winner-pays-bid may potentially introduce new vulnerabilities such as bid leakage or favoritism by the auctioneer in repeated environments \citep{BergemannHorner2017}.}

\subsection{Our Contribution} 

In this paper, we investigate the robustness of the result that the winner-pays-bid format minimizes revenue variance. We examine two dimensions along which this result fails. First, we relax the participation constraint from ex-post to interim individual rationality (IR) and consider asymmetric environments. Second, we extend the analysis from single-item settings to auctions that allocate multiple items.

\paragraph{Beyond Ex-Post IR.} The ex-post IR constraint restricts attention to auction formats in which losers make no payments. In many real-world settings, however, losing bidders incur substantial participation costs.  Bidders must evaluate the item, whether it is of good quality and meets their needs.  These costs are not recoverable if the bidder loses the auction.  Bidders in off-shore oil auctions \cite{hendricks1993bidding} must estimate not just the amount of oil available, but the costs of getting that oil to their refineries.  About half of the offered tracts get no bids at all.  In government procurement, a company 
must estimate the cost of providing the goods or services, and just estimating these costs can be substantial~\cite{mcafee1988incentives}. The design of crowdsourcing contests \cite{chawla2019optimal} is an example of an auction format where bids are not monetary payments, but participants exert efforts, and hence a form of all-pay auctions. Thus, in many situations, an assumption that losing bidders pay nothing is unreasonable.  While these situations also lead to a situation with an endogenous number of bidders \cite{mcafee1987auctions,mcafee2002set,waehrer1998auction}, provided the set of potential bidders is not too large, all will enter, bear the costs, and thus have negative profits conditional on losing.  Indeed, contests can be viewed as all-pay auctions \cite{fullerton1999auctionin}, where a bid is effort rather than cash.  In many real world situations, `losers pay nothing' is an inappropriate assumption.

In light of this discussion, we ask whether winner-pays-bid auctions remain risk-minimal when losing bidders may incur payments, while interim individual rationality is maintained, as in all-pay auctions. We first show that in a setting with independently and identically (i.i.d.) distributed values, the winner-pays-bid mechanism minimizes any convex risk measure of revenue for any implementable allocation rule (Theorem \ref{thm:symmetric}). Establishing this result requires an argument distinct from that of \citet{waehrer1998auction}, whose analysis relies critically on ex-post individual rationality. We develop a new technique that decomposes the revenue function into two components: one that depends only on the type of a fixed bidder $i$, and a second that satisfies a complementary slackness property, namely that it vanishes whenever bidder $i$ makes a nonzero payment.

More interestingly, the same decomposition shows that winner-pays-bid auctions need not be variance-minimal in non-i.i.d.\ environments. We use this approach to characterize the optimal auction format in such settings (Theorem~\ref{thm:asymmetric}). Allowing payments from losing bidders, while preserving interim individual rationality, enables transfers to be spread across states in a way that reduces the seller’s revenue risk. To our knowledge, this yields a novel auction format.

We further show that for any value distribution and implementable allocation rule, there exists a payment rule satisfying these optimality conditions (Theorem~\ref{theorem: existence pi}). The proof uses a fixed-point argument that may be of independent interest. Rather than constructing payment functions directly, we reformulate the problem in terms of the probability measures of the random variables induced by the conjectured payment scheme and bidders' values. We then define a continuous operator on this space and establish existence of a fixed point via the Schauder fixed-point theorem. Finally, we define payment functions based on the obtained fixed point.

\paragraph{Beyond Single-Item Auctions.} Another crucial assumption underlying the analysis of \citet{waehrer1998auction} is that only a single item is allocated and the allocation is deterministic. In such settings, for any realization of types, at most one agent makes a nonzero payment. Once multiple agents make nonzero payments, however, the auctioneer can introduce negative correlation across payments, thereby reducing revenue variance. We show that this can be achieved even under ex-post individual rationality, allowing revenue variance to fall below the level attained by discriminatory winner-pays-bid formats. 

Under interim individual rationality, we construct a mechanism that implements the efficient allocation and delivers constant revenue while maintaining non-negative payments in i.i.d.\ regular environments (Theorem~\ref{thm:general_n_k}). The optimal payment rule is characterized as the solution to a differential equation. The main technical challenge is to show that this solution remains within a bounded interval. Establishing this requires several nontrivial applications of properties of \citet{Myerson1981}'s virtual value function to analyze the critical points of the differential equation.

Nonetheless, while not variance-minimal, we show in Theorem \ref{th:varcomp} that the discriminatory winner-pays-bid auction guarantees lower revenue variance than the uniform $(k\!+\!1)$st-price auction, extending the corresponding result from single-item settings.

\subsection{Related Work}
Beyond the paper of \citet{waehrer1998auction}, \citet{eso1999auction} also study the problem of minimizing risk in settings with independent private value settings. The latter paper drops the requirement that payments need to be non-negative and, with that, is able to produce payment rules with zero variance. We propose an efficient mechanism for symmetric multi-item regular environments that achieves zero revenue variance while satisfying interim individual rationality and maintaining non-negative payments.

Another important line of work is undertaken by \citet{SundararajanYan} and \citet{bhalgat2012mechanism}, who design allocation and payment rules to minimize the seller's risk. A conclusion of their work is that the optimal design is highly dependent on the risk measure used by the seller. Instead, we fix the allocation rule and optimize among the payment rules that implement that allocation. Remarkably, the optimal design in this setting is independent of the risk measure used by the seller. 

\citet{pekevc2025variance} show that, in multi-item settings, the winner-pays-bid auction yields lower revenue variance than the uniform auction under log-concavity of bidders’ valuation distributions. We show that this result extends to general environments, dispensing with the log-concavity assumption.



Not surprisingly, risk aversion has been examined in the literature, but as far as we can tell, only from the perspective of a risk-neutral seller selling to risk-averse buyers. \citet{HarrisRaviv1981}, \citet{Holt1980}, \citet{MaskinRiley1984}, and \citet{RileySamuelson1981} all observe that, under independent private values but risk-averse bidders, the winner-pays-bid auction produces a higher revenue than the English auction.  There is a compelling intuition underlying this result.  Risk aversion on the part of the bidders does not affect the outcome in the English auction, which still ends at the second-highest value.  Meanwhile  winner-pays-bid presents a bidder with risk, in particular the risk of losing.  Bidders can reduce that risk by increasing their bid, leveling the outcomes slightly while increasing the likelihood of the better outcome, which is desirable for risk-averse bidders.  Given the Revenue Equivalence Theorem, a bit of risk aversion favors the winner-pays-bid auction.

In addition, the literature also considered optimal auctions for risk-averse bidders.  While optimal auctions with risk-averse bidders are immensely complicated, and even today we do not have the mathematical tools to evaluate randomized mechanisms offered to risk-averse bidders, \citet{MaskinRiley1984}, \citet{Matthews1983} and \citet{Moore1984} show that optimal auctions involve subsidizing high bidders who lose and penalizing low bidders, using risk to encourage higher bids.  More recently, \citet{Chawla18} made progress on this problem, giving a characterization that leads to approximately optimal results.  However, none of these papers considers the case of risk-neutral bidders and a risk-averse seller, which is the setting we consider in this work.

\section{Model and Preliminary Results}

\subsection{Setup and Auction Mechanisms}

\paragraph{Setup.} We consider a setting with $k$ homogeneous items and $n > k$ risk-neutral agents with unit demand. Each agent $i$ derives a private value $v_i \in \mathcal{V}_i \equiv [0, \overline{v}_i] \subseteq \mathbb{R}_{+}$ from the item. We define the product space $\mathcal{V} = \prod_i \mathcal{V}_i$ and denote a value profile by $\bv \in \mathcal{V}$. The value $v_i$ is agent $i$'s private information and we assume it is drawn according to a full-support prior $F_i$ with density $f_i > 0$. We assume values across agents are statistically independent. Some of our results also require that the bidders' values are drawn i.i.d., and we mention those where applicable. In this case, we drop agent indices on the objects introduced above.


\paragraph{Auctions.} An \emph{auction} is a mechanism that elicits a bid $b_i \in [0,\infty)$ from each bidder $i$, allocates the items for sale, and charges based on the allocation and the submitted bids. For each bid vector $\bb = (b_1, \hdots, b_n)$ the auction returns a random variable $\babne(\bb) \in \{0,1\}^n$ that specifies which bidder gets the item and a random variable $\bpbne(\bb) \in [0,\infty)^n$ that specifies the payments. The allocation rule satisfies $\sum_i \abne_i(\bb) \leq k$ and $\abne_i(\bb) \leq 1$ for each agent $i$.

A \emph{bidding strategy} for agent $i$ is a mapping $b_i : \mathcal{V}_i \rightarrow [0, \infty)$, which maps their value $v_i$ to a bid $b_i$. We refer to $\bb(\bv) = (b_1(v_1), \hdots, b_n(v_n))$. We say that a set of bidding strategies forms a \emph{Bayes-Nash equilibrium} (BNE) whenever:
$$\E[v_i \abne_i(\bb(\bv)) - \pbne_i(\bb(\bv)) \mid v_i] \geq \E[v_i \abne_i(b'_i, \bb_{-i}(\bv)) - \pbne_i(b'_i, \bb_{-i}(\bv)) \mid v_i], \qquad \forall i, v_i, b'_i.$$

We say that an auction $(\babne(\bb), \bpbne(\bb))$ together with equilibrium bidding function $\bb(\bv)$ \emph{implement} social choice function $\basc : [0,\infty)^n \rightarrow [0,1]^n$ if:
$$\basc(\bv) = \E[\babne(\bb(\bv)) \mid \bv].$$
The expectation above is over the randomness in the allocation function. In other words, the social choice function $\basc$ specifies the expected allocation given the values $\bv$, while $\babne$ is a random variable that specifies which bidder gets the item given the bids $\bb$. We also define the interim social choice function  $x_i : \mathcal{V}_i \rightarrow [0,1]$ as:
$$x_i(v_i) = \E[\asc_i(\bv) \mid v_i].$$
A social choice function is \emph{implementable} if there is an auction and an equilibrium bidding function implementing it.  A well-known result by \citet{Myerson1981} states that a social choice function is implementable if and only if its corresponding interim social choice functions are monotone non-decreasing.

An important observation is that the same social choice function can be implemented by many different auction formats. Consider, for example, the efficient allocation function which allocates to the bidder with the highest value. This social choice function can be implemented by (among others) the second-price auction (SP), the first-price auction (FP) and the all-pay auction (AP). In all of these formats: $\abne_i(\bb) = 1$ whenever $b_i$ is the highest bidder (breaking ties lexicographically) and $\abne_i(\bb) = 0$ otherwise. The payment rules are:
\begin{equation}\label{eq:payment_rule_bne}
\pbne_i^{\textsf{SP}}(\bb) = \abne_i(\bb) \max_{j \neq i} b_j \qquad \qquad \pbne_i^{\textsf{FP}}(\bb) = \abne_i(\bb) b_i \qquad \qquad \pbne_i^{\textsf{AP}}(\bb) = b_i
\end{equation}
Each of these auctions induces a different bidding function. Assume, for example, we have two bidders with i.i.d. uniform $[0,1]$ values. Then the bidding functions under BNE in each auction format are:
\begin{equation}\label{eq:bidding_function_bne}
b_i^{\textsf{SP}}(v_i) = v_i  \qquad \qquad
b_i^{\textsf{FP}}(v_i) = v_i/2 \qquad \qquad
b_i^{\textsf{AP}}(v_i) = v_i^2/2
\end{equation}

\paragraph{Direct Revelation Mechanisms.} By the revelation principle \citep{Myerson1981}, given an auction $(\babne(\bb), \bpbne(\bb))$ and a Bayesian Nash equilibrium $\bb(\bv)$, it is possible to construct an alternative auction defined by
$$
\babic(\bv) = \babne(\bb(\bv)) \qquad \bpbic(\bv) = \bpbne(\bb(\bv)).
$$
This auction induces exactly the same distribution of allocations and payments and admits truthful bidding as a Bayesian Nash equilibrium. Intuitively, the auction performs the equilibrium bid shading on the bidders’ behalf, so bidders optimally submit their true values. Since truthful bidding constitutes a Bayesian Nash equilibrium, we say that this auction format is \emph{Bayesian incentive compatible} (BIC). Formally, an allocation rule $\babic(\bv)$ and payment rule $\bpbic(\bv)$ are BIC if
\begin{equation}\label{eq:bic}\tag{BIC}
\E[v_i \abic_i(\bv) - \pbic_i(\bv) \mid v_i] 
\ge 
\E[v_i \abic_i(v'_i, \bv_{-i}) - \pbic_i(v'_i, \bv_{-i}) \mid v_i],
\quad \forall i, v_i, v'_i .
\end{equation}

Continuing our example of two bidders with i.i.d.\ uniform valuations, substituting the equilibrium bidding functions from~\eqref{eq:bidding_function_bne} into~\eqref{eq:payment_rule_bne}, the corresponding payment rules that implement the auctions in BIC form are
\begin{equation}\label{eq:example_payment_rules}
\pbic_i^{\textsf{SP}}(\bv) = \abne_i(\bv)\max_{j \neq i} v_j 
\qquad 
\pbic_i^{\textsf{FP}}(\bv) = \abne_i(\bv/2)\, v_i/2 = \abic_i(\bv)\, v_i/2 
\qquad 
\pbic_i^{\textsf{AP}}(\bv) = v_i^2/2 .
\end{equation}
We refer to these as the BIC implementations of the SP, FP, and AP auctions, respectively.


\subsection{Revenue Equivalence Theorem} The celebrated Revenue Equivalence Theorem \citep{Myerson1981,Milgrom1987} says that in independent private value settings, the expected payment of an agent conditioned on their type depends only on the interim social choice function. More precisely, if $\bb(\bv)$ is a BNE of auction $\babne(\bb), \bpbne(\bb)$ and if $x_i(v_i)$ is the corresponding interim allocation function then:
\begin{equation}\label{eq:ret}\tag{RET}
\E[\pbne_i(\bb(\bv)) \mid v_i] = z_i(v_i) := v_i x_i(v_i) - \int_0^{v_i} x_i(u) du
\end{equation}
In particular, the expected total revenue $\E[\sum_i \pbne_i(\bb(\bv))]$ of SP, FP and AP auctions is exactly the same in i.i.d. environments since the social choice function implemented is identical. In the example with two uniform $[0,1]$ i.i.d. bidders, all of the auctions have expected revenue $\nicefrac{1}{3}$ but, importantly, the distribution of the revenue is different (see Figure \ref{fig:revenue_distribution}).

\begin{figure}[h]
\begin{center}
\includegraphics[scale=.45]{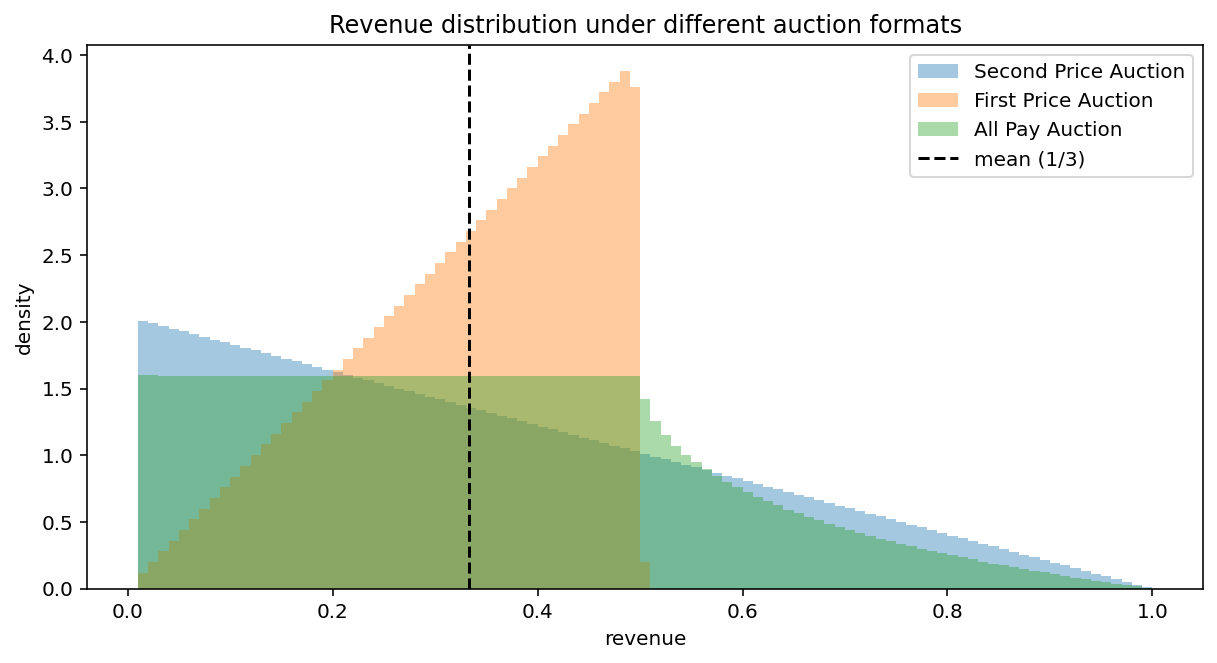}
\end{center}
\caption{The revenue distribution of the second-price, first-price, and all-pay auctions in case of two bidders with i.i.d. Uniform [0,1] value distributions. Observe that while all three formats give the same expected revenue of $\nicefrac{1}{3}$, the underlying distribution of revenue varies greatly across them.}
\label{fig:revenue_distribution}
\end{figure}

\subsection{Variance and Risk Minimization} 

Given that different auction formats can implement the same social choice function, which auction format minimizes the variance of revenue? When the social choice function is fixed, expected revenue is pinned down by the Revenue Equivalence Theorem; however, different auctions can induce markedly different revenue distributions, as illustrated in Figure~\ref{fig:revenue_distribution}.  If $R =\sum_i \pbne_i(\bb(\bv))$ is a random variable representing the revenue, we are interested in minimizing $\textsf{Var}(R)= \E[(R-\E[R])^2] = \E[R^2] - \E[R]^2$. Since $\E[R]^2$ is the same under all auction formats, we are interested in minimizing the second moment $\E[R^2]$.

More generally, we will be interested in minimizing $\E[g(R)]$ for a convex function $g:\R \rightarrow \R$. We will be also interested in quantifying the difference in distributions of other quantities such as $\E[g(U_i)]$, $\E[g(U)]$ and $\E[g(P_i)]$, where $U_i = v_i \cdot \abne_i(\bb(\bv)) - \pbne_i(\bb(\bv))$ is the utility of bidder $i$, $U = \sum_i U_i$ is the sum of utilities of all bidders, and $P_i = \pbne_i(\bb(\bv))$ is the payment of bidder $i$.

\subsection{Winner-Pays-Bid Payment Rule} At the center of this paper is the notion of the \emph{winner-pays-bid} (WPB) payment rule. 
Given an implementable allocation function $\babic(\bv)$, let $x_i(v_i)$ be its interim allocation and $z_i(v_i)$ be the interim payment rules defined in equation \eqref{eq:ret}. We define the WPB payment rule as:
\begin{equation}\label{eq:wpb}\tag{WPB}
\pbic_i^{\textsf{WPB}}(\bv) = b_i^{\textsf{WPB}}(v_i)\abic_i(\bv)\text{, where } b_i^{\textsf{WPB}}(v_i) := \frac{z_i(v_i)}{x_i(v_i)}
\end{equation}
where $b_i^{\textsf{WPB}}(v_i)$ refers to what the bid \emph{would have been} in a BNE implementation of the allocation function $\babic(\cdot)$, where the winner pays their bid, and other agents pay zero. Note however that $\pbic_i^{\textsf{WPB}}(\bv)$ itself results in a BIC implementation of the same allocation function $\babic(\cdot)$.\footnote{In other words, if we are given an allocation function $\babic(\cdot)$ and are asked to implement it as winner-pays-bid, we have two choices.
\begin{enumerate}
\item In a BNE implementation, the agents submit a bid of $b_i^{\textsf{WPB}}(v_i)$. We invert the bid using the inverse of $b_i^{\textsf{WPB}}(\cdot)$ to get the true value $v_i$. We implement the allocation function $\babic(\bv)$ on the true values $\bv$ obtained from inversion, and charge agent $i$ a payment of $b_i^{\textsf{WPB}}(v_i)\abic_i(\bv)$.
\item In a BIC implementation, the agents submit their true values, and we implement the allocation function $\babic(\bv)$ directly on the submitted true values, and charge them a payment of $\pbic_i^{\textsf{WPB}}(\bv) = b_i^{\textsf{WPB}}(v_i)\abic_i(\bv)$, where $b_i^{\textsf{WPB}}(v_i)$ evaluates to  $\frac{z_i(v_i)}{x_i(v_i)}$.
\end{enumerate}
We find the BIC version more convenient to write about because it doesn't involve the inversion of bids to get values, even though for every valuation profile $\bv$, there is no difference between these two versions in the distribution of outcomes and payments they produce. Hereafter, we only focus on the BIC versions $\babic(\cdot)$ and $\bpbic(\cdot)$.}

The astute reader might have observed that for the winner-pays-bid payment rule to make sense, the function $\frac{z_i(v_i)}{x_i(v_i)}$ has to be monotonic non-decreasing. Indeed, this turns out to be the case, as formalized in the Lemma below.

\begin{lemma}\label{lemma:symmetric1} For any implementable allocation function $\babic(\bv)$ and corresponding interim allocation $x_i(\bv)$ and interim payments $z_i(\bv)$, the function $b_i^{\WPB}(v_i) := z_i(v_i) / x_i(v_i)$ is monotone non-decreasing.
\end{lemma}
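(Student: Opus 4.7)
The plan is to work directly from the Myerson identity $z_i(v_i) = v_i x_i(v_i) - \int_0^{v_i} x_i(u)\,du$ in \eqref{eq:ret}, using only the fact that the interim allocation $x_i$ is non-decreasing (which is exactly the implementability characterization in the text). Rewriting
\[
b_i^{\WPB}(v_i) \;=\; v_i \;-\; \frac{\int_0^{v_i} x_i(u)\,du}{x_i(v_i)},
\]
monotonicity of $b_i^{\WPB}$ reduces to showing that the subtracted ratio grows by at most $1$ per unit of $v_i$, which I will establish directly by a cross-multiplication argument rather than by differentiating (since $x_i$ is only assumed monotone, not smooth).

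I would first dispose of the degenerate regime: if $x_i(v_i)=0$, then monotonicity of $x_i$ forces $x_i(u)=0$ for all $u\le v_i$, hence $z_i(v_i)=0$, and I simply set $b_i^{\WPB}(v_i)=0$ on this (possibly empty) initial interval. For the main case, I take $v<v'$ with $x_i(v),x_i(v')>0$ and show the equivalent inequality $z_i(v')x_i(v)\ge z_i(v)x_i(v')$. Expanding the difference and splitting $\int_0^{v'}=\int_0^{v}+\int_v^{v'}$ yields
\begin{equation*}
z_i(v')x_i(v)-z_i(v)x_i(v') \;=\; x_i(v)\left[(v'-v)\,x_i(v') - \int_v^{v'} x_i(u)\,du\right] + \bigl(x_i(v')-x_i(v)\bigr)\int_0^{v} x_i(u)\,du .
\end{equation*}
Both summands are non-negative from the monotonicity of $x_i$: on $[v,v']$ we have $x_i(u)\le x_i(v')$, so the bracket in the first term is non-negative; and $x_i(v')\ge x_i(v)$ together with non-negativity of $x_i$ makes the second summand non-negative.

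I do not foresee any serious obstacle — the lemma is a short calculation once the cross-difference is regrouped into this form. A more conceptual but less self-contained alternative would be to observe (via integration by parts) that $b_i^{\WPB}(v_i)$ equals the conditional expectation $\mathbb{E}[U\mid U\le v_i]$ where $U$ has law proportional to the Lebesgue--Stieltjes measure $dx_i$, and truncated-from-above conditional means are classically non-decreasing in the cutoff. I would nonetheless favor the elementary route above, since it avoids having to handle possible atoms of $dx_i$ explicitly.
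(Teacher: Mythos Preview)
Your argument is correct. The cross-difference identity
\[
z_i(v')\,x_i(v)-z_i(v)\,x_i(v') \;=\; x_i(v)\Bigl[(v'-v)\,x_i(v') - \int_v^{v'} x_i(u)\,du\Bigr] + \bigl(x_i(v')-x_i(v)\bigr)\int_0^{v} x_i(u)\,du
\]
is algebraically right, and both summands are non-negative from monotonicity and non-negativity of $x_i$, exactly as you say. Your treatment of the degenerate regime $x_i(v_i)=0$ is also fine.

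The paper takes a different route: it simply differentiates $b_i^{\WPB}(v_i)=v_i-\int_0^{v_i}\frac{x_i(u)}{x_i(v_i)}\,du$ in $v_i$ and obtains $(b_i^{\WPB})'(v_i)=\int_0^{v_i}\frac{x_i'(v_i)\,x_i(u)}{x_i(v_i)^2}\,du\ge 0$, invoking $x_i'\ge 0$. That is shorter but tacitly assumes $x_i$ is differentiable, which implementability alone does not give (monotone functions need not be differentiable everywhere, and the paper never imposes extra smoothness on $x_i$). Your cross-multiplication argument avoids this assumption entirely and also handles the $x_i=0$ boundary case that the paper ignores, so in that sense it is the more robust proof. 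The conditional-expectation remark you add --- that $b_i^{\WPB}(v_i)=\int_0^{v_i} u\,dx_i(u)\big/\int_0^{v_i} dx_i(u)$ is a truncated mean under the Stieltjes measure $dx_i$ --- is a nice conceptual gloss, though as you note it would require a bit of care with atoms to be made fully rigorous.
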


\begin{proof}
We differentiate the expression for $b_i^{\WPB}(v_i)$ from the formula in equation \eqref{eq:ret}.
$$b_i^{\WPB}(v_i) = v_i - \int_0^{v_i} \frac{x_i(u)}{x_i(v_i)} du$$

$$(b_i^{\WPB})'(v_i) = 1 - \frac{x_i(v_i)}{x_i(v_i)} - \int_0^{v_i} \frac{-x'_i(v_i) x_i(u)}{x_i(v_i)^2} du = \int_0^{v_i} \frac{x'_i(v_i) x_i(u)}{x_i(v_i)^2} du \geq 0.$$
The inequality holds because $x_i(\cdot)$ is monotone non-decreasing for any implementable $\babic(\cdot)$.
\end{proof}

Lemma~\ref{lemma:symmetric1} says that regardless of  the value distributions, any implementable allocation function can be implemented with a WPB payment rule.

\subsection{Single-Item Ex-Post IR Variance Minimization} For single-item environments, \citet{waehrer1998auction} showed that the winner-pays-bid mechanism minimizes variance among all ex-post IR mechanisms, i.e. mechanisms for which $v_i \abic_i(\bv) - \pbic_i(\bv) \geq 0$ for all $\bv$. Their original proof assumes the efficient allocation, but it holds for any allocation with minimal modifications. We state the result below and provide a complete proof and discussion in Appendix \ref{sec: risk-minimizing ex-post IR}.

\begin{theorem}[Generalization of \cite{waehrer1998auction}]\label{thm:expost-ir} For any implementable allocation function $\babic(\bv)$, among all the ex-post IR payment rules that implement $\babic(\bv)$, the winner-pays-bid payment rule minimizes the expected risk $\E[g(R)]$ for any convex function $g$, where $R = \sum_i \pbic_i(\bv)$ is the random variable corresponding to the total revenue.
\end{theorem}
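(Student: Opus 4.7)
My strategy exploits the key consequence of ex-post IR combined with non-negative payments in a single-item setting: losers must pay zero. Since ex-post IR requires $v_i\abic_i(\bv) - \pbic_i(\bv) \ge 0$, any agent $i$ with $\abic_i(\bv) = 0$ must have $\pbic_i(\bv) \le 0$, and non-negativity then forces $\pbic_i(\bv) = 0$. Writing $P_i = \pbic_i(\bv)$ and using the fact that at most one bidder wins when $k = 1$, the total revenue satisfies
\begin{equation*}
g(R) = \sum_i g(P_i)\,\1\{i \text{ wins}\} + g(0)\,\1\{\text{no winner}\}.
\end{equation*}
Since the ``no winner'' event depends only on the allocation rule $\babic$ and not on the payment rule, the term $g(0)\,\P[\text{no winner}]$ is identical across every ex-post IR implementation of the same allocation. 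It therefore suffices to lower-bound each $\E[g(P_i)\,\1\{i \text{ wins}\}]$ by a quantity that the WPB rule attains.

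For this lower bound, I invoke revenue equivalence to pin down the conditional mean of $P_i$ given $v_i$ on the winning event. Starting from $\E[P_i \mid v_i] = z_i(v_i)$ and splitting on whether bidder $i$ wins (recalling losers pay zero), I obtain $z_i(v_i) = \E[P_i \mid v_i,\, i \text{ wins}]\cdot x_i(v_i)$, so $\E[P_i \mid v_i,\, i \text{ wins}] = z_i(v_i)/x_i(v_i) = b_i^{\WPB}(v_i)$ for every ex-post IR implementation of $\babic$. Under WPB specifically, $P_i$ equals $b_i^{\WPB}(v_i)$ deterministically on the winning event, so this conditional distribution collapses to a point mass. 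A conditional Jensen's inequality then gives
\begin{equation*}
\E[g(P_i)\,\1\{i \text{ wins}\} \mid v_i] = x_i(v_i)\cdot \E[g(P_i) \mid v_i,\, i \text{ wins}] \ge x_i(v_i)\, g\!\left(b_i^{\WPB}(v_i)\right),
\end{equation*}
with equality under WPB. Taking expectation over $v_i$, summing over $i$, and adding the common $g(0)\,\P[\text{no winner}]$ term yields $\E[g(R)] \ge \E[g(R^{\WPB})]$.

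The main obstacle is the conditioning bookkeeping: the event ``bidder $i$ wins'' depends on the full profile $\bv$ through the (possibly randomized) allocation rule, so Jensen must be applied to the distribution of $P_i$ conditioned on \emph{both} $v_i$ and this event, rather than on $v_i$ alone. A minor subtlety is handling values $v_i$ with $x_i(v_i) = 0$, where the corresponding term vanishes and the ratio $z_i(v_i)/x_i(v_i)$ is undefined but also irrelevant. The overall intuition is that WPB is the unique implementation that concentrates the winner's payment at its conditional mean given winning, and Jensen quantifies the cost to the seller of having a more dispersed conditional distribution.
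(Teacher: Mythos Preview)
Your proposal is correct and essentially identical to the paper's proof: both exploit ex-post IR to reduce total revenue to the winner's payment, condition on $v_i$ and the event that $i$ wins, use revenue equivalence to identify the conditional mean as $z_i(v_i)/x_i(v_i)$, and apply Jensen's inequality (with equality for WPB since the conditional payment degenerates to a point mass). Your treatment is slightly more explicit than the paper's in separating out the $g(0)\,\P[\text{no winner}]$ term before summing, whereas the paper absorbs this into its final ``law of total expectation'' step, but the argument is the same.
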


\section{Single-Item Auctions under Interim Individual Rationality}\label{sec:interim-ir}

We start by investigating whether the conclusion of \citet{waehrer1998auction} holds when we drop the ex-post IR requirement and only require interim IR, i.e., $\E[v_i \abic_i(\bv) - \pbic_i(\bv) \mid v_i] \geq 0$ for each $v_i$. This will allow us to analyze auctions like all-pay (AP). All-pay is a natural candidate for a variance minimizing auction since it minimizes the the variance of the payment of each individual agent.

\begin{lemma}\label{lemma:all-pay}
For any implementable allocation function $\babic(\bv)$, among all the payment rules that implement $\babic(\bv)$, the all-pay payment rule $\pbic^\AP_i(\bv) = z_i(v_i)$ minimizes the risk $\E[g(P_i)]$ where $g$ is any convex function and $P_i(\bv)$ is the random variable corresponding to the payment of agent $i$.
\end{lemma}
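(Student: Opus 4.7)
The plan is to reduce the lemma to a single application of the conditional Jensen inequality, using the Revenue Equivalence Theorem to pin down the conditional expectation of any admissible payment rule.

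First I would recall that, by the Revenue Equivalence Theorem \eqref{eq:ret}, any BIC payment rule $\pbic_i(\bv)$ implementing the allocation $\babic(\bv)$ must satisfy
\[
\E[\pbic_i(\bv) \mid v_i] \;=\; z_i(v_i).
\]
In particular, the all-pay payment rule $\pbic^\AP_i(\bv) = z_i(v_i)$ is itself a valid BIC implementation (it is a deterministic function of $v_i$ alone and clearly satisfies \eqref{eq:bic} together with $\babic$, because the incentive term reduces to the standard Myersonian form), and for it we have trivially $g(\pbic^\AP_i(\bv)) = g(z_i(v_i))$.

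Next I would fix any other BIC payment rule $\pbic_i(\bv)$ implementing $\babic(\bv)$, condition on $v_i$, and apply Jensen's inequality to the convex function $g$ over the conditional distribution of $\bv_{-i}$ given $v_i$:
\[
\E\!\left[g(\pbic_i(\bv)) \mid v_i\right] \;\geq\; g\!\left(\E[\pbic_i(\bv) \mid v_i]\right) \;=\; g(z_i(v_i)) \;=\; \E\!\left[g(\pbic^\AP_i(\bv)) \mid v_i\right].
\]
Taking an outer expectation over $v_i$ and using the tower property yields $\E[g(P_i)] \geq \E[g(P_i^\AP)]$, which is the claimed inequality.

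The argument has no real obstacle; the only subtlety is to be explicit that the relevant conditioning is on $v_i$ (not on the full profile $\bv$) so that the right-hand side of Jensen's inequality equals the interim expected payment fixed by \eqref{eq:ret}, and to note that the conclusion holds without any IR restriction since we only use that competing rules induce the same interim payment as $\pbic^\AP_i$.
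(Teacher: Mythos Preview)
Your proof is correct and follows essentially the same approach as the paper: condition on $v_i$, apply Jensen's inequality, invoke \eqref{eq:ret} to identify the conditional expectation with $z_i(v_i)$, observe that the all-pay payment is deterministic given $v_i$ so Jensen holds with equality for it, and take the outer expectation.
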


\begin{proof} Given any payment rule $\bpbic(\bv)$ implementing the given allocation function $\babic(\bv)$ and any convex function $g$, we want to show that $\E[g(\pbic_i(\bv))] \geq \E[g(\pbic_i^\AP(\bv))]$. The proof follows the same pattern as the proof of Theorem \ref{thm:expost-ir} but without conditioning on the winning bidder. We observe that:
$$\E[g(\pbic_i(\bv)) \mid v_i] \geq g(\E[\pbic_i(\bv) \mid v_i]) = g(\E[\pbic_i^\AP(\bv) \mid v_i]) = \E[g(\pbic_i^\AP(\bv)) \mid v_i]$$
where the first inequality is Jensen's inequality, the subsequent equality is due to the Revenue Equivalence Theorem \eqref{eq:ret} and the final equality is due to the fact that  $\pbic_i^\AP(\bv)$ is completely determined when conditioned on the type $v_i$ and hence Jensen's inequality holds with equality.
\end{proof}

While the all-pay payment rule minimizes the risk (and variance) of each individual bidder's payment, it does not minimize variance for the revenue. The reason is that the the second moment of the revenue is
$$\textstyle \E[(\sum_i \pbic_i(\bv))^2] = \sum_i \E[ \pbic_i(\bv)^2] + 2 \sum_{i \neq j} \E[\pbic_i(\bv) \pbic_j(\bv)].$$
The first term corresponds to the second moment of each individual bidder. For ex-post IR auctions, since only the winner pays, the second term disappears. For all-pay auctions (and non ex-post IR auction more generally) the second term is not identically zero, and, in fact, tends to dominate the variance.

The proof of Theorem \ref{thm:expost-ir} crucially relies on ex-post IR on two different points, but interestingly, the proof makes no requirement about the distribution of values except for independence. The analysis for non ex-post IR auctions is more nuanced and the answer depends on whether the setting is symmetric or asymmetric.

\subsection{Symmetric Settings: IID Valuations}\label{sec:interim-single-item-symmetric}

We start by considering an i.i.d. environment with the efficient allocation (with any tie breaking rule). In that case, we will recover the conclusion of \citet{waehrer1998auction} through a different argument based on a novel revenue decomposition technique.

\begin{lemma}\label{lemma:symmetric2}  For any i.i.d. environment, the revenue of the efficient allocation $\babic(\bv)$ coupled with the WPB payment function can be written as:
$$\textstyle\sum_i \pbic_i^{\WPB}(\bv) = \max_i b^{\WPB}(v_i)$$
where $b^{\WPB}(v_i) := z_i(v_i) / x_i(v_i)$.\footnote{We drop the subscript $i$ from $b^{\WPB}$ since the $b^{\WPB}$ is the same across all bidders in the i.i.d. environment.}
\end{lemma}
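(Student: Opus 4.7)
The plan is to unpack the definition of the WPB payment rule, use the fact that the efficient allocation assigns to a single bidder, and then apply monotonicity of $b^{\WPB}$ together with the symmetry of the i.i.d. setting.

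First I would write out the total revenue from the definition of \eqref{eq:wpb}:
\[
\textstyle\sum_i \pbic_i^{\WPB}(\bv) \;=\; \sum_i b^{\WPB}(v_i)\,\abic_i(\bv).
\]
Since the allocation is efficient and we are in a single-item setting (and we have fixed some tie-breaking rule), exactly one bidder wins: $\abic_{i^*}(\bv)=1$ and $\abic_j(\bv)=0$ for $j\neq i^*$, where $i^*$ is an index achieving $v_{i^*}=\max_i v_i$. So the sum collapses to $b^{\WPB}(v_{i^*})=b^{\WPB}(\max_i v_i)$.

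Next I would invoke two facts. Because the environment is i.i.d., the interim allocation $x_i$ and interim payment $z_i$ are the same function across bidders, hence so is $b^{\WPB} = z_i/x_i$; this is what justifies dropping the subscript $i$ in the statement. Second, Lemma~\ref{lemma:symmetric1} gives that $b^{\WPB}$ is monotone non-decreasing on $\mathcal{V}$. Combining these two facts, $b^{\WPB}(\max_i v_i) = \max_i b^{\WPB}(v_i)$, which is exactly the claim.

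There is no real obstacle: once one observes that in a single-item efficient auction the total revenue under WPB equals the ``bid'' of the highest-value bidder, the i.i.d. symmetry lets us pass the monotone function $b^{\WPB}$ through the maximum. The only point to flag is that the argument uses all three ingredients — single-item (so one winner), efficient allocation (so the winner has the top value), and i.i.d. (so the same $b^{\WPB}$ applies to every $v_i$); absent any one of these the identity fails, and the decomposition that drives the subsequent variance-minimization arguments breaks down.
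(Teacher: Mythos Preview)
Your proof is correct and follows essentially the same approach as the paper: observe that under the efficient single-item allocation only the highest-value bidder pays, so the revenue is $b^{\WPB}(\max_i v_i)$, and then use the monotonicity of $b^{\WPB}$ from Lemma~\ref{lemma:symmetric1} (together with the i.i.d.\ symmetry) to rewrite this as $\max_i b^{\WPB}(v_i)$. The paper's proof is just a more compressed version of what you wrote.
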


\begin{proof} Since the allocation is efficient, the winner is the bidder with the highest value; therefore:
$$\textstyle \sum_i \pbic_i^{\WPB}(\bv) = b^{\WPB}(\max_j v_j) = \max_i b^{\WPB}(v_i)$$
where the second equality follows from the monotonicity of $b^{\WPB}$ established in Lemma~\ref{lemma:symmetric1}.
\end{proof}

\begin{theorem}\label{thm:symmetric} For any i.i.d. environment, among all payment rules that implement the efficient allocation $\babic(\bv)$, the winner-pays-bid payment rule minimizes the risk $\E[g(R)]$, for any convex function $g$, where $R = \sum_i \pbic_i(\bv)$ is the random variable corresponding to the total revenue.
\end{theorem}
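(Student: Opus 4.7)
The plan is to combine Lemma \ref{lemma:symmetric2}---which identifies $R^{\WPB} = b^{\WPB}(v_{(1)})$ as a function of the top order statistic alone---with a symmetrization of the payment rule and a conditional Jensen's inequality, reducing the theorem to a convex-order comparison between two random variables that are functions of $v_{(1)}$.

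First, I would restrict without loss of generality to symmetric payment rules. Given an arbitrary $\bpbic$ implementing the efficient allocation, the symmetrized rule $\tilde \pbic_i(\bv) := \tfrac{1}{n!}\sum_\sigma \pbic_{\sigma(i)}(\sigma \bv)$ also implements the efficient allocation and satisfies BIC and interim IR. Since $\sigma \bv \stackrel{d}{=} \bv$ under the i.i.d.\ assumption, Jensen's inequality applied to the uniform average over permutations gives $\E[g(\tilde R)] \le \E[g(R)]$. Hence one may assume $R(\bv)$ is a symmetric function of $\bv$.

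Next, I would condition on $v_{(1)}$ and apply conditional Jensen. Setting $\mu(v) := \E[R \mid v_{(1)} = v]$,
\[
\E[g(R)] \;=\; \E\bigl[\E[g(R) \mid v_{(1)}]\bigr] \;\ge\; \E[g(\mu(v_{(1)}))],
\]
while Lemma \ref{lemma:symmetric2} gives $\E[g(R^{\WPB})] = \E[g(b^{\WPB}(v_{(1)}))]$. By Revenue Equivalence, $\E[\mu(v_{(1)})] = \E[R] = \E[R^{\WPB}] = \E[b^{\WPB}(v_{(1)})]$, so the two random variables share the same mean. The proof then reduces to establishing the convex-order dominance $\mu(v_{(1)}) \ge_{\mathrm{cx}} b^{\WPB}(v_{(1)})$, which is equivalent to $\E[g(\mu(v_{(1)}))] \ge \E[g(b^{\WPB}(v_{(1)}))]$ for every convex $g$.

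The main obstacle is this final convex-order comparison. Given equal means, it suffices to verify a Karlin single-crossing condition: $\mu(v) - b^{\WPB}(v)$ is non-positive for $v$ below some threshold $v^*$ and non-negative above it. To do so, I would parametrize $\mu(v)$ using the winner's conditional expected payment $w(v) := \E[\pbic_i(\bv) \mid v_i = v,\, i \text{ wins}]$ and the loser's conditional expected payment $L^*(v) := \E[\pbic_i(\bv) \mid v_i = v,\, i \text{ loses}]$, which by revenue equivalence applied to bidder $i$ must satisfy $w(v)F(v)^{n-1} + (1-F(v)^{n-1})L^*(v) = z(v)$ along with $w(v), L^*(v) \ge 0$. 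The asymmetric way $L^*$ enters $\mu(v)$---added across losers with weight $n-1$---as compared to $b^{\WPB}(v) = z(v)/F(v)^{n-1}$ then delivers the required single crossing, since non-negativity of losers' payments shifts the conditional expected revenue downward for small $v$ and upward for large $v$ relative to the WPB benchmark. Combined with the equal-means condition, Karlin's inequality then yields convex-order dominance and completes the proof.
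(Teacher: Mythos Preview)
Your symmetrization and the conditional-Jensen reduction to comparing $\mu(v_{(1)}) := \E[R\mid v_{(1)}]$ with $b^{\WPB}(v_{(1)})$ are fine, but the final step has a real gap. First, $\mu(v)$ cannot be parametrized using only $w(v)$ and $L^*(v)$: the losers entering $\mu(v)$ have values $u<v$, and their payments may depend on the winner's value $v$ as well, so you need the finer object $\ell(u,v)=\E[\pbic_i\mid v_i=u,\ v_{(1)}=v,\ i\text{ loses}]$, not $L^*(u)=\E[\pbic_i\mid v_i=u,\ i\text{ loses}]$. Second, and more importantly, the Karlin single-crossing of $\mu-b^{\WPB}$ you assert is false in general. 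For $n=2$, uniform values, take loser payments $\ell(u,v)=c$ on the two rectangles $\{u\in[0.1,0.2],\,v\in[0.2,0.3]\}$ and $\{u\in[0.4,0.5],\,v\in[0.5,0.6]\}$ with $c=0.04$, and let the winner pay $[z(v_i)-\beta(v_i)]/v_i$ where $\beta(v)=\int_v^1\ell(v,u)\,du$. One checks $\beta(v)\le z(v)$ so all payments are non-negative and RET holds; yet $\mu(v)-b^{\WPB}(v)=(\alpha(v)-\beta(v))/v$ with $\alpha(v)=\int_0^v\ell(u,v)\,du$ follows the sign pattern $-,+,0,-,+$ on $[0.1,0.6]$. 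So single-crossing fails, and your argument stops here.

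What does work---and what the paper does---bypasses $\mu$ entirely. Write the tangent inequality $g(R)\ge g(R^{\WPB})+g'(R^{\WPB})(R-R^{\WPB})$ and, using Lemma~\ref{lemma:symmetric2}, decompose $g'(R^{\WPB})=g'(b^{\WPB}(v_i))+\big[g'(\max_j b^{\WPB}(v_j))-g'(b^{\WPB}(v_i))\big]$ for each $i$. The first piece depends only on $v_i$, so its product with $\pbic_i-\pbic_i^{\WPB}$ has zero expectation by RET; the second piece is non-negative and vanishes whenever $i$ is the winner, hence $\mu_i(\bv)\pbic_i^{\WPB}(\bv)=0$ a.s., leaving $\E[\mu_i(\bv)\pbic_i(\bv)]\ge 0$ by non-negativity of payments. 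Interestingly, if you push through your route correctly you will rediscover exactly this computation: for $n=2$ the quantity you need is $\int h(v)[\alpha(v)-\beta(v)]f(v)\,dv=\iint_{u<v}[h(v)-h(u)]\,\ell(u,v)f(u)f(v)\,du\,dv\ge 0$ for any nondecreasing $h$, which is precisely the paper's complementary-slackness argument rewritten. So the convex-order claim is true, but not via single-crossing; the missing idea is the decomposition above.
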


\begin{proof}
It is again without loss of generality to focus on BIC mechanisms. We will show that given any payment rule $\pbic_i(\bv)$ that implements the efficient allocation, we have $\E[g(\sum_j \pbic_j(\bv))] \geq \E[g(\sum_j \pbic_j^{\WPB}(\bv))]$. We start by applying convexity:
$$\textstyle g(\sum_j \pbic_j(\bv)) \geq g(\sum_j \pbic_j^{\WPB}(\bv)) + g'(\sum_j \pbic_j^{\WPB}(\bv)) \cdot (\sum_j \pbic_j(\bv) - \sum_j \pbic_j^{\WPB}(\bv))$$
Taking expectations over the expression above, we observe that it is enough to establish that:
\begin{equation}\label{eq:convexity}
\textstyle   \E[g'(\sum_j \pbic_j^{\WPB}(\bv)) \cdot (\sum_i \pbic_i(\bv) - \sum_i \pbic_i^{\WPB}(\bv))] \geq 0
\end{equation}
In order to show equation \eqref{eq:convexity}, we will start by using Lemma \ref{lemma:symmetric2} to decompose the derivative term for each bidder $i$ as follows:
\begin{equation}\label{eq:rev_decomposition}
\textstyle g'(\sum_j \pbic_j^{\WPB}(\bv)) =\lambda_i(v_i)  + \mu_i(\bv)
\end{equation}
for $$\lambda_i(v_i) = g'(b^{\WPB}(v_i)) \quad \text{and} \quad \mu_i(\bv) = g'(\max_j b^{\WPB}(v_j)) - g'(b^{\WPB}(v_i))$$
Now, we can re-write the left hand side of equation \eqref{eq:convexity} as:
$$\begin{aligned}
& \textstyle \E[g'(\sum_j \pbic_j^{\WPB}(\bv)) \cdot (\sum_i \pbic_i(v) - \sum_i \pbic_i^{\WPB}(\bv))] \\ 
& \qquad  = 
\textstyle\sum_i  \E[g'(\sum_j \pbic_j^{\WPB}(\bv)) (\pbic_i(\bv) - \pbic_i^{\WPB}(\bv)) ]  \\
& \qquad  =  \textstyle \sum_i  \E[ (\lambda_i(v_i)  + \mu_i(\bv)) (\pbic_i(\bv) - \pbic_i^{\WPB}(\bv)) ]\\
& \qquad  = \textstyle \sum_i  \E[ \lambda_i(v_i) (\pbic_i(\bv) - \pbic_i^{\WPB}(\bv)) ] + 
\sum_i  \E[\mu_i(\bv) (\pbic_i(\bv) - \pbic_i^{\WPB}(\bv)) ]
\end{aligned}$$
where the first and last equalities are re-arrangement of terms and the second equality is the revenue decomposition in equation \eqref{eq:rev_decomposition}. We conclude the proof by showing that each of the terms in the last expression is non-negative.

For the first term, we observe that $\lambda_i(v_i)$ depends only on $v_i$, so we can apply the Revenue Equivalence Theorem \eqref{eq:ret} as follows:
$$\E[ \lambda_i(v_i) (\pbic_i(\bv) - \pbic_i^{\WPB}(\bv)) ] = \E_{v_i} [ \lambda_i(v_i) \cdot \E[\pbic_i(\bv) - \pbic_i^{\WPB}(\bv) \mid v_i]] = 0$$
since $\E[\pbic_i(\bv) \mid v_i] = \E[\pbic_i^{\WPB}(\bv) \mid v_i]$.

For the second term we observe that $\mu_i(\bv) \geq 0$, since by convexity $g'$ is monotone non-decreasing. Also, $\mu_i(\bv) = 0$ whenever $i$ is the winner. Since $\pbic_i^{\WPB}(\bv) = 0$ whenever $i$ doesn't win, it follows that:
$$\mu_i(\bv) \pbic_i^{\WPB}(\bv) = 0 $$
almost surely. Hence we can write the second term as:
$$\E[\mu_i(\bv) (\pbic_i(\bv) - \pbic_i^{\WPB}(\bv)) ] = \E[\mu_i(\bv) \pbic_i(\bv)] \geq 0,$$
since we maintain the assumption of non-negative payments, thus concluding the proof. \end{proof}

\begin{corollary} Let $\babic(\bv)$ be the allocation that selects the highest bidder if it is above a fixed reserve price $r$ and otherwise selects no one. For any i.i.d. environment, the winner-pays-bid payment rule minimizes risk $\E[g(R)]$ for any convex function $g$.
\end{corollary}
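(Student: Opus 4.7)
The plan is to adapt the proof of Theorem~\ref{thm:symmetric} essentially verbatim, checking only that the two structural properties of the WPB rule that the argument relied on carry over: (i) the total WPB revenue equals $\max_i b^{\WPB}(v_i)$ (the content of Lemma~\ref{lemma:symmetric2}), and (ii) $\pbic_i^{\WPB}(\bv) = 0$ whenever $i$ is not the ``winner'' and $\mu_i(\bv) = 0$ whenever $i$ is the winner.

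First I would pin down $b^{\WPB}$ on the full domain. For the reserve allocation, $x_i(v_i) = F(v_i)^{n-1}$ for $v_i \geq r$ and $x_i(v_i) = 0$ for $v_i < r$, so $z_i(v_i) = 0$ for $v_i < r$. Accordingly, define $b^{\WPB}(v_i) := z_i(v_i)/x_i(v_i)$ on $[r, \overline{v}]$, where the ratio is well-defined by full support of $F$, and set $b^{\WPB}(v_i) := 0$ on $[0, r)$. Lemma~\ref{lemma:symmetric1} applied on $[r, \overline{v}]$ gives monotonicity there, and at the boundary $\lim_{v \downarrow r} b^{\WPB}(v) = r \geq 0 = b^{\WPB}(r^-)$, so $b^{\WPB}$ is monotone non-decreasing on the entire type space.

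Next I would establish the analogue of Lemma~\ref{lemma:symmetric2}: $\sum_i \pbic_i^{\WPB}(\bv) = \max_i b^{\WPB}(v_i)$. There are two cases. If $\max_i v_i < r$, then no bidder wins, all WPB payments are zero, and also $\max_i b^{\WPB}(v_i) = 0$. If $\max_i v_i \geq r$, the unique winner $i^*$ has the largest value, hence also the largest $b^{\WPB}(v_{i^*})$ by monotonicity, and $\pbic_{i^*}^{\WPB}(\bv) = b^{\WPB}(v_{i^*})$ while $\pbic_j^{\WPB}(\bv) = 0$ for $j \neq i^*$.

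With these two facts in hand, the proof of Theorem~\ref{thm:symmetric} goes through unchanged. Specifically, for any other payment rule $\bpbic$ implementing the same allocation, convexity of $g$ reduces the claim to showing $\E[g'(\sum_j \pbic_j^{\WPB}(\bv)) \cdot \sum_i (\pbic_i(\bv) - \pbic_i^{\WPB}(\bv))] \geq 0$. Using the same decomposition $g'(\sum_j \pbic_j^{\WPB}(\bv)) = \lambda_i(v_i) + \mu_i(\bv)$ with $\lambda_i(v_i) = g'(b^{\WPB}(v_i))$ and $\mu_i(\bv) = g'(\max_j b^{\WPB}(v_j)) - g'(b^{\WPB}(v_i))$, the $\lambda_i$ terms vanish by the Revenue Equivalence Theorem applied type-by-type, and the $\mu_i$ terms satisfy $\mu_i(\bv) \geq 0$ by monotonicity of $g'$ together with $\mu_i(\bv) \pbic_i^{\WPB}(\bv) = 0$ a.s., since either $i$ is the winner (making $\mu_i = 0$) or $\pbic_i^{\WPB}(\bv) = 0$. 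The only subtlety—and the main place care is needed—is the behavior at the reserve, namely ensuring $b^{\WPB}$ remains monotone across $v_i = r$ so that the ``the winner has the largest $b^{\WPB}$'' step is valid; this was handled in the first paragraph.
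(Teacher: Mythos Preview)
Your proposal is correct and takes exactly the same approach as the paper: the paper's proof is the one-line remark ``Same as the previous theorem except that we re-define $b^{\WPB}(v_i)$ such that it is zero for $v_i < r$,'' and you have carefully carried out precisely this redefinition, verified monotonicity across the reserve threshold, and re-checked the revenue decomposition. Your write-up is in fact more explicit than the paper's own sketch.
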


\begin{proof} Same as the previous theorem except that we re-define $b^{\WPB}(v_i)$ such that it is zero for $v_i < r$.
\end{proof}

\subsection{Asymmetric Settings}\label{sec:interim-single-item-asymmetric}

In asymmetric environments, where bidders’ value distributions may differ, the winner-pays-bid rule no longer minimizes revenue variance under interim individual rationality. We first provide an example of an auction with strictly lower variance. We then derive optimality conditions that characterize variance-minimizing payment rules. Finally, we show that there always exists a payment rule satisfying the optimality conditions via a functional fixed point of a certain operator on the space of probability measures over a compact interval.

\subsubsection{Illustrative Example}

We consider an example with two independent bidders with CDF $F_1(v_1) = v_1$ and $F_2(v_2) = v_2^2$ on $[0,1]$. In the efficient allocation, the interim allocation probabilities are: $x_1(v_1) = F_2(v_1) = v_1^2$ and $x_2(v_2) = F_1(v_2) = v_2$ and the interim payment are given by the Myerson integral in equation \eqref{eq:ret}:

$$z_1(v_1) = \int_0^{v_1} (x_1(v_1) - x_1(u)) du = \frac{2}{3} v_1^3 \qquad 
  z_2(v_2) = \int_0^{v_2} (x_2(v_2) - x_2(u)) du = \frac{1}{2} v_2^2 $$

Now, the winner-pays-bid auction charges $b_1(v_1) = z_1(v_1) / x_1(v_1)$ to agent $1$ whenever $v_1 \geq v_2$ and $b_2(v_2) = z_2(v_2) / x_2(v_2)$ to agent $2$ whenever $v_2 > v_1$. Hence we must have: $b_1(v_1) = 2v_1/3$ and $b_2(v_2) = v_2/2$. Using this, we can compute the  second moment of the revenue of WPB:

$$\int_0^1 f_1(v) x_1(v) b_1(v)^2 dv + \int_0^1 f_2(v) x_2(v) b_2(v)^2 dv \approx 0.188$$

Now, we describe a payment rule that is a BIC-implementation of the optimal allocation and has strictly smaller variance. Consider the payments:

$$ \pbic_1^*(v_1, v_2) = \pi_1(v_1)\cdot \one\{  h(v_1) \geq v_2 \} \qquad 
\pbic_2^*(v_1, v_2) = \pi_2(v_2) \cdot \one\{  v_2 > h(v_1) \}$$
for functions:
$$h(u) = u\left(\frac{4}{3}\right)^{1/4}  \qquad \pi_1(v_1) = \left\{ \begin{aligned} 
& v_1 /\sqrt{3}, & & \text{for } v_1 \leq (3/4)^{1/4} \\
& 2 v_1^3/3, & & \text{for } v_1 > (3/4)^{1/4}
\end{aligned} \right.
\qquad
\pi_2(v_2) = \frac{v_2 }{2} \left(\frac{4}{3}\right)^{1/4} $$
The auction is depicted in Figure \ref{fig:asymmetric}. It has the non-standard feature that in a small region (labelled B in the figure), bidder 2 gets the item but bidder 1 pays $\pi_1(v_1)$. First, we can check that this payment rule coupled with the efficient allocation in a BIC auction by checking that the revenue equivalence holds:
$$\int_0^1 f_2(v_2) \pbic_1^*(v_1, v_2)  dv_2 = z_1(v_1) \qquad \int_0^1 f_1(v_1) \pbic_2^*(v_1, v_2) dv_1 = z_2(v_2)$$
And we can compute its second moment as:
\begin{equation*}
\int_0^{h^{-1}(1)} f_1(v_1) F_2(h(v_1)) \pi_1(v_1)^2 dv_1 +
\int_{h^{-1}(1)}^1 f_1(v_1) z_1(v_1)^2 dv_1 + \int_0^1 f_2(v_2) F_1(h^{-1}(v_2)) \pi_2(v_2)^2 dv_2,
\end{equation*}
which is approximately 0.186.

\begin{figure}[h]
\begin{center}
\begin{tikzpicture}[scale=5]
\draw (0,0) rectangle (1,1);
\draw (0,0) -- (1,1);
\node at (.5, -.1) {$v_1$};
\node at (-.1, .5) {$v_2$};
\node at (.83, 1.08) {$h(v)$};

\fill[color=red, opacity=.2] (0.000000,0.000000)--(0.93,1) -- (1,1) -- cycle;

\draw[dashed] (0.000000,0.000000)--(0.93,1);
\node at (.7,.3) {$A$};
 \node at (.75,.81) {$B$};
\node at (.3,.7) {$C$};
\end{tikzpicture}
\end{center}
\caption{In the efficient mechanism that minimizes revenue variance for $2$ asymmetric bidders with PDF $F_1(v_1) = v_1$ and $F_2(v_2) = v_2^2$, we allocate to bidder $1$ in regions A and to bidder $2$ in regions B+C, but agent $1$ pays in regions A+B while bidder $2$ pays in regions B+C. Hence, in the shaded region, bidder $2$ is allocated but bidder $1$ pays.}
\label{fig:asymmetric}
\end{figure}

\subsubsection{Risk-Minimizing Mechanism}

The intuition why WPB is not optimal in non-i.i.d. settings is that $b_1(u) > b_2(u)$ so it no longer holds that $\pbic_1(v_1, v_2) + \pbic_2(v_1, v_2) = \max(b_1(v_1), b_2(v_2))$, which is what is driving optimality in Theorem \ref{thm:symmetric}. A solution is to increase the region where bidder $1$ pays so that we can spread their payments along a larger region. With the new design, we have that $\pbic_1^*(v_1, v_2) + \pbic_2^*(v_1, v_2) = \max(\pi_1(v_1), \pi_2(v_2))$. This implies that $\bpbic^*$ is actually the variance minimizing payment rule for this environment.

More generally we can prove optimality below. Later, in Theorem \ref{theorem: existence pi}, we show that there always exist a payment rule satisfying the optimality conditions given in Theorem \ref{thm:asymmetric}.

\begin{theorem}[Optimality Condition]\label{thm:asymmetric}
For any implementable allocation function $\babic(\bv)$, given functions $\pi_i(v_i)$, let $\bpbic^*(\bv)$ be the payment rule where the agent with largest $\pi_i(v_i)$ pays $\pi_i(v_i)$ (breaking ties arbitrarily) and the remaining agents pay zero. If $\bpbic^*$ satisfies the Revenue Equivalence Theorem \eqref{eq:ret}, for any convex function $g$, $\E[g(\sum_i \pbic_i^*(\bv))] \leq \E[g(\sum_i \pbic_i(\bv))]$ for any payment rule $\bpbic(\bv)$ implementing allocation $\babic(\bv)$.
\end{theorem}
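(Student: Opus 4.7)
The approach I would take is to mimic the proof of Theorem~\ref{thm:symmetric} line-for-line, replacing the WPB-specific bid function $b^{\WPB}(v_i)$ throughout by the generic function $\pi_i(v_i)$. The reason this works is that the proof of Theorem~\ref{thm:symmetric} used the WPB structure in exactly one place: the identity $\sum_j \pbic_j^{\WPB}(\bv) = \max_j b^{\WPB}(v_j)$. Here, the analogous identity holds by the very definition of $\bpbic^*$: only the agent achieving the maximum $\pi_i(v_i)$ pays, and that agent pays exactly $\pi_i(v_i)$, so $\sum_j \pbic_j^*(\bv) = \max_j \pi_j(v_j)$.

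Concretely, given any competing payment rule $\bpbic(\bv)$ implementing $\babic(\bv)$, I would first apply convexity of $g$ together with a supporting-linear bound at $\sum_j \pbic_j^*(\bv)$ to reduce the claim to establishing
$$
\E\!\left[ g'\!\left(\max_j \pi_j(v_j)\right) \cdot \sum_i \bigl(\pbic_i(\bv) - \pbic_i^*(\bv)\bigr) \right] \geq 0.
$$
For each bidder $i$ I would then split the derivative term as
$$
g'\!\left(\max_j \pi_j(v_j)\right) = g'(\pi_i(v_i)) + \mu_i(\bv), \quad \mu_i(\bv) := g'\!\left(\max_j \pi_j(v_j)\right) - g'(\pi_i(v_i)),
$$
exactly in the spirit of the $\lambda_i,\mu_i$ decomposition of Theorem~\ref{thm:symmetric}, and handle the two resulting sums separately.

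The $g'(\pi_i(v_i))$ sum vanishes: because this factor is $v_i$-measurable, I would condition on $v_i$ and use that both $\bpbic$ and $\bpbic^*$ satisfy the Revenue Equivalence Theorem~\eqref{eq:ret} for the allocation $\babic$—the first automatically, the second by hypothesis—so $\E[\pbic_i(\bv) - \pbic_i^*(\bv) \mid v_i] = z_i(v_i) - z_i(v_i) = 0$. For the $\mu_i(\bv)$ sum, convexity of $g$ makes $g'$ monotone non-decreasing, hence $\mu_i(\bv) \geq 0$; moreover $\mu_i(\bv) \cdot \pbic_i^*(\bv) = 0$ almost surely, since either $i$ is the $\pi$-maximizer (in which case $\mu_i = 0$) or $i$ is not (in which case $\pbic_i^*(\bv) = 0$ by construction). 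Thus each term collapses to $\E[\mu_i(\bv) \pbic_i(\bv)] \geq 0$, using non-negativity of the competing payments.

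I do not expect a serious technical obstacle in this argument itself---it is a clean abstraction of Theorem~\ref{thm:symmetric}, in which the WPB choice $\pi_i = b^{\WPB}$ is replaced by any functions for which the ``max-pays-$\pi$'' scheme satisfies the RET. The real difficulty is hidden in the hypothesis: producing $\pi_i$'s that simultaneously satisfy the RET constraint across all bidders is non-trivial in asymmetric environments, which is precisely the content of the companion existence result (Theorem~\ref{theorem: existence pi}) that this theorem hands off to.
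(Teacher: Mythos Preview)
Your proposal is correct and matches the paper's proof essentially line for line: the same convexity reduction, the same $\lambda_i(v_i)=g'(\pi_i(v_i))$, $\mu_i(\bv)=g'(\max_j\pi_j(v_j))-g'(\pi_i(v_i))$ decomposition, the RET cancellation for the $\lambda$-terms, and the complementary-slackness observation $\mu_i(\bv)\pbic_i^*(\bv)=0$ combined with non-negativity of payments for the $\mu$-terms. Your remark that the substantive difficulty is deferred to the existence result (Theorem~\ref{theorem: existence pi}) is also exactly the paper's framing.
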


\begin{proof}  Let $\bpbic(\bv)$ be any payment rule implementing the allocation function and let $\bpbic^*(\bv)$ be a payment rule satisfying the optimality conditions in the theorem statement. To show that $\E[g(\sum_i \pbic_i^*(\bv))] \leq \E[g(\sum_i \pbic_i(\bv))]$, it is enough to show that $\E[g'(\sum_i \pbic_i^*(\bv)) (\sum_i \pbic_i(\bv) - \sum_i \pbic_i^*(\bv))] \geq 0$ by the argument in the proof of Theorem  \ref{thm:symmetric}. We decompose the revenue as follows:
$$\textstyle g'(\sum_i \pbic_i^*(\bv)) = \lambda_i(v_i) + \mu_i(\bv)$$
where:
$$\lambda_i(v_i) = g'(\pi_i(v_i)) \qquad \mu_i(\bv) = g'(\max_j \pi_j(v_j)) - g'(\pi_i(v_i))$$
As in the proof of Theorem  \ref{thm:symmetric}, we can write:
$$\textstyle \E[g'(\sum_i \pbic_i^*(\bv)) (\sum_i \pbic_i(\bv) - \sum_i \pbic_i^*(\bv))] = \textstyle \sum_i  \E[ \lambda_i(v_i) (\pbic_i(\bv) - \pbic_i^{*}(\bv)) ] + \sum_i  \E[\mu_i(\bv) (\pbic_i(\bv) - \pbic_i^{*}(\bv)) ] $$
The first term on the right-hand side is zero by the RET. Since $\mu_i(\bv) = 0$ when $\pbic^*_i(\bv) > 0$, we have $\mu_i(\bv) \pbic^*_i(\bv) = 0$. Hence the second term on the right-hand side is: $\sum_i \E[\mu_i(\bv) \pbic_i(\bv)] \geq 0$, since we maintain non-negativity of payments.
\end{proof}

\begin{theorem}[Existence]\label{theorem: existence pi}
For any implementable allocation function $\babic(\bv)$ such that the expected payment $z_i(v_i)$ is continuous and strictly increasing for every agent $i$, there exist functions $\pi_i$ satisfying the conditions of the previous theorem.
\end{theorem}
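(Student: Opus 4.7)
The plan is to cast the optimality conditions of Theorem~\ref{thm:asymmetric} as a fixed-point problem on a space of tuples of probability measures, and then apply the Schauder fixed-point theorem. The first step is to reformulate the conditions. Since under $\bpbic^*$ agent $i$ pays $\pi_i(v_i)$ exactly when $\pi_i(v_i) > \max_{j\neq i}\pi_j(v_j)$ and pays zero otherwise, the Revenue Equivalence condition $\E[\pbic_i^*(\bv)\mid v_i] = z_i(v_i)$ becomes
$$z_i(v_i) \;=\; \pi_i(v_i)\prod_{j\neq i} G_j(\pi_i(v_i)), \qquad \text{where } G_j(y) := \P[\pi_j(V_j) \leq y],\; V_j \sim F_j.$$
Given the $G_j$'s, solving for $\pi_i$ reduces to inverting the non-decreasing map $H_i(y) := y\prod_{j\neq i}G_j(y)$; the true unknowns are the tuple $(G_1,\dots,G_n)$ of induced CDFs, which must be consistent with the $\pi_i$'s they determine.

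Set $M := \max_i \bar v_i$ and let $\mathcal{K}$ denote the set of $n$-tuples of Borel probability measures on $[0,M]$, equipped with the product weak-$*$ topology. Since $[0,M]$ is compact, Prokhorov's theorem makes $\mathcal{K}$ convex, metrizable, and compact. For $\bm G \in \mathcal{K}$, define
$$\pi_i^{\bm G}(v_i) \;:=\; \inf\bigl\{ y \in [0,M] : H_i(y) \geq z_i(v_i) \bigr\},$$
which is well-defined because $H_i$ is right-continuous, non-decreasing with $H_i(0)=0$ and $H_i(M)=M \geq \bar v_i \geq z_i(\bar v_i)$. Let $G'_i$ be the pushforward of $F_i$ under $\pi_i^{\bm G}$, and set $T(\bm G) := (G'_1,\dots,G'_n)$. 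Then $T$ manifestly maps $\mathcal{K}$ into itself, and each $\pi_i^{\bm G}$ is automatically non-decreasing with $\pi_i^{\bm G}(0)=0$.

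To verify continuity of $T$, suppose $\bm G^{(m)} \to \bm G$ weakly in $\mathcal{K}$. At every continuity point $y$ of each $G_j$ one has $G_j^{(m)}(y)\to G_j(y)$, so $H_i^{(m)}\to H_i$ off a countable set, and consequently $\pi_i^{\bm G^{(m)}}(v_i)\to \pi_i^{\bm G}(v_i)$ at every $v_i$ whose image $z_i(v_i)$ avoids the countable set of jump values of $H_i$. Because $z_i$ is continuous and strictly increasing by hypothesis, this exceptional set has $F_i$-measure zero, and bounded convergence yields weak convergence $G'^{(m)}_i \to G'_i$. Schauder's theorem then produces a fixed point $\bm G^* = T(\bm G^*)$, and the associated $\pi_i := \pi_i^{\bm G^*}$ satisfy $H_i(\pi_i(v_i))=z_i(v_i)$ at every $v_i$ at which $H_i$ is continuous, which is the Revenue Equivalence condition from Theorem~\ref{thm:asymmetric}.

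The main obstacle lies in promoting this near-identity to the exact identity $H_i(\pi_i(v_i))=z_i(v_i)$ for \emph{all} $v_i$. A jump of $H_i$ at some $y^*$ corresponds to an atom of some $G_j^*$ at $y^*$, which in turn means $\pi_j$ is constant on a set of positive $F_j$-measure. One approach is to rule this out directly: combining strict monotonicity and continuity of each $z_j$ with the defining relation $z_j = H_j\circ\pi_j$ on the fixed point, an atom of $G_j^*$ propagates through the $H_i$'s and the inverse constructions to produce a contradiction. An alternative, and technically cleaner, route is to regularize $\mathcal{K}$ by mixing each measure with a small multiple $\varepsilon$ of Lebesgue measure on $[0,M]$, thereby forcing the $G_j$'s (and hence $H_i$'s) to be strictly increasing and continuous so that the identity holds exactly for every $\varepsilon>0$; passing to a weak-$*$ limit along $\varepsilon\to 0$ using compactness of $\mathcal{K}$, together with a uniform modulus of continuity of the $z_i$'s, yields a fixed point of the original problem that satisfies Revenue Equivalence globally.
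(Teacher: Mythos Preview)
Your proposal is essentially the same as the paper's proof: both recast the RET condition as $z_i(v_i)=\pi_i(v_i)\prod_{j\neq i}G_j(\pi_i(v_i))$, work on tuples of probability measures on a compact interval, define each $\pi_i$ as the generalized inverse of $H_i(s)=s\prod_{j\neq i}G_j(s)$, push forward to obtain an operator $T$, establish continuity of $T$ via weak convergence (using that discontinuities of $H_i^{-1}$ are countable and $z_i$ is continuous strictly increasing so the exceptional set has $F_i$-measure zero), and invoke Schauder. The paper makes the Schauder step more explicit by embedding $\mathcal{M}_1([0,M])$ isometrically into the Banach space $\mathrm{BL}([0,M])^*$ and verifying compactness/convexity there, whereas you appeal to Prokhorov directly; these are equivalent.

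One point worth noting: you flag the issue of promoting the near-identity $H_i(\pi_i(v_i))=z_i(v_i)$ to an exact identity when $H_i$ has jumps, and sketch two repairs (atom-exclusion via contradiction, or $\varepsilon$-regularization). The paper does \emph{not} address this and simply concludes ``by construction, the function has the desired properties.'' So on this point you are more careful than the paper, though neither of your two sketches is fully carried out; the regularization route in particular would need a careful limit argument to ensure the RET identity survives as $\varepsilon\to 0$.
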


\begin{proof}
    First, observe existence of $\pi_i$ for each agent $i$ requires
    \begin{equation*}
        z_i(v_i) = \pi_i(v_i) \prod_{j \neq i} G_j(\pi_i(v_i))
    \end{equation*}
    where $G_j$ is the CDF of the random variable $\pi_j(v_j)$ induced by $F_j$. We thus seek a solution to the functional equation, where the unknowns are both $\pi_i$ and $G_j$ for all agents $i,j$.

    Fix a large enough $M > 0$ such that $z_i(v_i) \leq M$ for each agent $i$ and value $v_i$. Note that this is always possible since $z_i(v_i) \leq \overline{v}_i < \infty$ and we have a finite number of agents. Let $\mathcal{M}_1([0,M])$ be the set of probability measures on $[0,M]$; we will often write $\mathcal{M}_1$ for this set. For any $\mu \in \mathcal{M}_1$, define
    \begin{equation*}
        G^{\mu}(x) = \mu([0,x]).
    \end{equation*}
    Note that $G^{\mu}$ is non-decreasing, right-continuous, the left limit $G^{\mu}(x_{-}) = \mu([0,x))$ exists for each $x \in [0,M]$, and we have $G^{\mu}(0) = 0$ and $G^{\mu}(M) = 1$. Hence, $G^{\mu}$ satisfies the conditions of a CDF.
    
    Given a profile $\mu = (\mu_1, \dots, \mu_n) \in \mathcal{M}_1^n$, define for each agent $i$ function
    \begin{equation*}
        H_i^{\mu}(s) \equiv s \prod_{j \neq i} G^{\mu_j}(s) \qquad s \in [0,M].
    \end{equation*}
    Because each $G^{\mu_j}$ is non-decreasing, right-continuous, and non-negative, $H_i^{\mu}$ is non-decreasing and right-continuous, and $H_i^{\mu}(0) = 0$ and $H_i^{\mu}(M) = M$. We define
    \begin{equation*}
        \pi_i^{\mu}(v_i) \equiv \inf \{s \in [0,M]: H_i^{\mu}(s) \geq z_i(v_i)\}.
    \end{equation*}
    Note that this is well-defined since $H_i^{\mu}(M) = M \geq z_i(v_i)$ for any $v_i$. Moreover, $\pi_i^{\mu}$ is non-decreasing since $z_i$ and $H_i^{\mu}$ are non-decreasing. It follows $\pi_i^{\mu}$ is Borel measurable. It also follows $\pi_i^{\mu}(v_i)$ defines a random variable.

    Let $\Tilde{\mu}_i$ be the law (push-forward measure) of the random variable $\pi_i^{\mu}(v_i)$. Note that $\Tilde{\mu}_i \in \mathcal{M}_1$. Define operator $T: \mathcal{M}_1^n \to \mathcal{M}_1^n$ by
    \begin{equation*}
        T(\mu) = (\Tilde{\mu}_1, \dots, \Tilde{\mu}_n).
    \end{equation*}
    We equip $\mathcal{M}_1$ with the topology of weak convergence and $\mathcal{M}_1^n$ with the corresponding product topology. 
    
    Next, we show that $T$ is continuous. To this end, suppose $\mu^k = (\mu_1^k, \dots, \mu_n^k) \xrightarrow{w} \mu = (\mu_1, \dots, \mu_n)$. Fix an arbitrary agent $i$. Note that for any $s \in [0,M]$ that is a continuity point of $G^{\mu_j}$, for every $j \neq i$, we have
    \begin{equation*}
        H_i^{\mu^k}(s) = s \prod_{j \neq i} G^{\mu_j^k}(s) \to s \prod_{j \neq i} G^{\mu_j}(s) = H_i^{\mu}(s).
    \end{equation*}
    Since the set of discontinuities of $G^{\mu_j}$ for each agent $j$ is countable, it follows the set of discontinuities of $H_i^{\mu}$ is also countable. Thus, $H_i^{\mu^k} \to H_i^{\mu}$ point-wise on a dense set. Let $D(H_i^{{\mu}-1})$ denote the set of discontinuities of the generalized inverse of $H_i^{\mu}$. The set is countable.\footnote{Since $H_i^{\mu}$ is non-decreasing, so is $H_i^{\mu-1}$. A non-decreasing function can only have jump discontinuities. Moreover, every jump means an interval gap, and so there are countably many jumps as there can only be countably many disjoint intervals. Hence, there are countably many discontinuities.} Since we assume $z_i$ is continuous and strictly increasing, letting $Y_i = z(v_i)$, we must have $\mathbb{P}(Y_i \in D(H_i^{\mu-1})) = 0$. Hence, by Lemma \ref{lemma: convergence of generalized inverse} in the Appendix, we have
    \begin{equation*}
        \pi_i^{\mu^k}(v_i) = H_i^{\mu^k-1}(Y_i) \to H_i^{\mu-1}(Y_i) = \pi_i^{\mu}(v_i) \qquad a.s.
    \end{equation*}
    Moreover, since $0 \leq \pi_i^{\mu^k} \leq M$, we must have that the law of $\pi_i^{\mu^k}(v_i)$ converges weakly to $\pi_i^{\mu}(v_i)$. It follows that $T_i: \mathcal{M}_1^n \to \mathcal{M}_1$ given by $\mu \mapsto \Tilde{\mu}_i$ is continuous. Repeating this for all agents $i$ yields continuity of $T$.

    We show in Lemma \ref{lemma: fixed point} in the Appendix that $T$ admits a fixed point $\mu^{*}$ by applying the Schauder fixed point theorem (Lemma \ref{lemma: schauder}). In particular, we embed $\mathcal{M}_1$ endowed with the bounded Lipschitz metric, which metrizes weak convergence, in the dual of the space of bounded Lipschitz functions on $[0,M]$ endowed with the bounded Lipschitz norm, which is a Banach space. We show we can do so via an injective isometry and prove the image of $\mathcal{M}_1$ under this mapping is compact in the ambient space. Finally, we apply the Schauder fixed point theorem to a transformation of $T$ under the isometry, and recover a fixed point $\mu^{*}$ of $T$ by taking its inverse. The details are contained in Appendix \ref{appendix: existence of pi}.
    
    To conclude, we define $\pi_i \equiv \pi_i^{\mu^{*}}$, for each agent $i$. By construction, the function has the desired properties for each agent $i$.
\end{proof}

\section{Multi-Item Auctions}

We now turn to the allocation of $k \ge 2$ items. Under interim individual rationality, we show that in i.i.d.\ regular environments there exists a mechanism that implements the efficient allocation and achieves zero revenue variance. Under ex-post IR, we provide an example demonstrating that revenue variance can be reduced relative to winner-pays-bid formats by introducing negative correlation in the payments of winning bidders. Finally, we show that the revenue variance ranking between the winner-pays-bid format and the uniform $(k+1)$-st price auction extends from single-item to multi-item environments.

\subsection{Interim Individually Rational Mechanisms}

We begin with interim participation constraints. With a single item, the variance minimizing payment rule in Section \ref{sec:interim-ir} had the property that at most one bidder pays, causing the cross terms $\E[P_i P_j]$ in $\E[(\sum_i P_i)^2]$ to cancel. With more than one item, it is possible to take this idea further and create negative correlations between payments. We first exemplify how this can be done for the uniform distribution and then generalize to any regular distribution. 


\subsubsection{Example for the Uniform Distribution}

We start with an example of $2$-items and $3$-bidders with values distributed i.i.d. from $U[0,1]$. In that case, we exhibit a payment rule satisfying interim IR, RET and non-negative payments with zero variance. We define a payment function that depends only on the ordered valuations $v_{(1)} \ge v_{(2)} \ge v_{(3)}$:

\[ P_{(1)}(\bv) = \frac{2-v_{(2)}}{4}, \quad P_{(2)}(\bv) = \frac{v_{(2)}}{4}, \quad P_{(3)}(\bv) = 0. \]

Observe that since $P_{(1)}(\bv) + P_{(2)}(\bv)$ is a constant, the revenue has variance zero so this payment rule is clearly optimal if it is feasible. Non-negativity of the payment follows from $0 \leq v_{(2)} \leq 1$. We are left to check that it satisfies RET:
$$\begin{aligned}  \int_{0}^{v} \frac{2-u}{4} (2u)  du + \int_{v}^{1} \frac{v}{4} (2v)  du & = \frac{1}{2} \int_0^v (2u-u^2) du + \frac{v^2}{2} \int_v^1 du  \\ & = \left(\frac{v^2}{2} - \frac{v^3}{6}\right) + \left(\frac{v^2}{2} - \frac{v^3}{2}\right) = v^2 - \frac{2}{3}v^3 = z(v). \end{aligned}$$

At first glance, this payment rule resembles that of \citet{eso1999auction}, who study variance minimization subject to interim individual rationality but relax the requirement that payments be non-negative; see Appendix~\ref{sec:eso_futo} for a discussion. Their mechanism, however, relies on allowing negative transfers and on all bidders making non-zero payments. By contrast, our mechanism features non-negative payments and requires transfers only from winning bidders.

\subsubsection{Optimal Payment Rule for Regular Distributions}

We generalize the aforementioned construction to any (Myerson-)regular distribution. We show that it is possible to correlate the payments such that the variance is zero, while keeping the properties of interim IR, non-negative payments, and that losers pay zero. 

We recall that a distribution $F_i$ with density $f_i$ is (Myerson-)regular if its corresponding \emph{virtual value function}
$$\phi_i(v_i) = v_i - \frac{1-F_i(v_i)}{f_i(v_i)}$$ 
is strictly increasing. Notable examples are all distributions with log-concave densities like uniform, exponential and normal. We refer to Section \ref{sec: useful facts for regular distributions} in the Appendix for useful facts on regular distributions, including a proof of the following bound that will be useful in our analysis:

\begin{lemma}\label{lemma: rev and endpoint of virtual value}
    Fix an i.i.d. regular environment and with $2 \leq k < n$ and let $\bar R$ be the expected revenue of any mechanism that implements the efficient allocation and satisfies interim IR. Then:
        \begin{equation*}
        \bar{R} < -(n-k)\phi(0).
    \end{equation*}
\end{lemma}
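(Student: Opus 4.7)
The plan is to upper-bound $\bar R$ by the expected virtual surplus via Myerson's envelope formula, and then exploit the identity $\mathbb{E}[\phi(v)]=0$ together with the strict monotonicity of $\phi$ that regularity provides.

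First, I would invoke the standard derivation. In any BIC mechanism, monotonicity plus the envelope characterization gives $z_i(v_i) = v_i x_i(v_i) - u_i(0) - \int_0^{v_i} x_i(u)\,du$, where $u_i(0)\ge 0$ by interim IR at the zero type. Taking expectations and swapping the order of integration yields $\mathbb{E}[z_i(v_i)] = \mathbb{E}[\phi(v_i)\,x_i(v_i)] - u_i(0)$, and summing over $i$ gives
$$\bar R \;=\; \mathbb{E}\!\Bigl[\sum_i \phi(v_i)\,\asc_i(\bv)\Bigr] - \sum_i u_i(0) \;\le\; \mathbb{E}\!\Bigl[\sum_i \phi(v_i)\,\asc_i(\bv)\Bigr].$$
Since the environment is i.i.d.\ and the allocation is efficient, $\asc_i(\bv)=1$ exactly when $v_i$ is among the top $k$ values, so the right-hand side equals $\mathbb{E}\!\bigl[\sum_{j=1}^{k} \phi(v_{(j)})\bigr]$, where $v_{(1)}\ge\cdots\ge v_{(n)}$ denote the order statistics.

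Next, I would use the identity $\mathbb{E}[\phi(v)] = \int_0^{\overline{v}}\bigl(vf(v)-(1-F(v))\bigr)\,dv = 0$, which follows because both $\int v f(v)\,dv$ and $\int (1-F(v))\,dv$ equal $\mathbb{E}[v]$. Applied to the $n$ (exchangeable) order statistics this gives $\sum_{j=1}^{n} \mathbb{E}[\phi(v_{(j)})] = 0$, and hence
$$\mathbb{E}\!\Bigl[\sum_{j=1}^{k} \phi(v_{(j)})\Bigr] \;=\; -\sum_{j=k+1}^{n} \mathbb{E}[\phi(v_{(j)})].$$
Regularity makes $\phi$ strictly increasing, and full support of $F$ on $[0,\overline{v}]$ gives $v_{(j)}>0$ almost surely for every $j$; therefore $\mathbb{E}[\phi(v_{(j)})] > \phi(0)$ strictly for each $j \in \{k+1,\ldots,n\}$, and summing over these indices yields $-\sum_{j=k+1}^{n}\mathbb{E}[\phi(v_{(j)})] < -(n-k)\phi(0)$. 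Chaining the inequalities gives $\bar R < -(n-k)\phi(0)$.

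The argument is essentially routine once the virtual-value representation of revenue is in place; the only point requiring care is the strictness of the final inequality, which is obtained by combining strict monotonicity of $\phi$ with $\mathbb{P}(v_{(j)}>0)=1$ for every $j$. The hypothesis $k<n$ is used so that $\{k+1,\ldots,n\}$ is non-empty; the stronger assumption $k\ge 2$ does not appear to play a role in this particular lemma.
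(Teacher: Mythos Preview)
Your proof is correct and follows essentially the same approach as the paper: express revenue via expected virtual surplus, use $\mathbb{E}[\phi(v)]=0$ to rewrite it as $-\sum_{j=k+1}^{n}\mathbb{E}[\phi(v_{(j)})]$, and then apply strict monotonicity of $\phi$ together with $v_{(j)}>0$ a.s.\ to obtain the strict bound. The only minor difference is that you explicitly carry the $u_i(0)\ge 0$ slack from interim IR (yielding $\bar R\le$ virtual surplus), whereas the paper works directly with equality under the normalization in \eqref{eq:ret}; your version is a harmless generalization.
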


\begin{theorem} \label{thm:general_n_k}
For any i.i.d. regular environment with continuously differentiable density $f$ and with $2 \leq k < n$, there is a non-negative payment rule that implements the efficient allocation and satisfies interim IR such that the variance of the revenue is zero.
\end{theorem}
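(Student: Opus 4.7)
The plan is to reduce the existence of a zero-variance, non-negative, interim-IR payment rule to choosing a single scalar function $\tau$ satisfying a first-order linear ODE, then to show that $\tau$ actually stays in the interval $[0, \bar{R}]$.

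Motivated by the uniform example, I would fix the following ansatz: only the $k$ winners pay, with the rank-$k$ (marginal) winner paying $\tau(v_{(k)})$ and the $k-1$ inframarginal winners splitting the residual $\bar{R} - \tau(v_{(k)})$ equally. Concretely, for any value profile $\bv$ with sorted values $v_{(1)} \ge \cdots \ge v_{(n)}$,
\[
P_{(k)}(\bv) = \tau(v_{(k)}), \qquad P_{(j)}(\bv) = \frac{\bar{R} - \tau(v_{(k)})}{k-1}\ \text{for}\ 1 \le j \le k-1, \qquad P_{(j)}(\bv) = 0\ \text{for}\ j > k,
\]
where $\bar{R}$ is the expected revenue of the efficient allocation (pinned down by revenue equivalence). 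By construction $\sum_i P_i(\bv) \equiv \bar{R}$, so the revenue has zero variance; non-negativity of the payments reduces to $\tau(v) \in [0, \bar{R}]$; and interim IR is automatic from matching the Myerson formula $z$ in \eqref{eq:ret} (since the interim allocation is monotone, the Myerson integral is non-negative).

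To determine $\tau$, I would condition on a representative bidder $i$ with value $v$. Using standard order-statistic identities --- bidder $i$ is rank exactly $k$ with probability $\pi_k(v) = \binom{n-1}{k-1}(1-F(v))^{k-1}F(v)^{n-k}$, and conditional on being in the top $k-1$ the value $v_{(k)}$ equals the $(k-1)$-th order statistic $W_{k-1}$ among the other $n-1$ bidders --- the BIC requirement $\E[P_i \mid v_i = v] = z(v)$ becomes a Volterra-type integral equation in $\tau$. Differentiating produces a first-order linear ODE $\tau'(v) = a(v)\tau(v) + b(v)$ whose coefficients are explicit in $F$, $f$, and $\pi_k$; on $(0, \overline{v})$ the coefficients are regular (since $\pi_k > 0$ there), so standard theory yields a unique continuous solution, with endpoint values at $v = 0$ and $v = \overline{v}$ determined by continuity of the Volterra kernel (noting that both sides of the integral equation vanish faster than the relevant denominators at the endpoints).

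The crux --- flagged in the overview --- is showing that this solution stays in $[0, \bar{R}]$. I would proceed by analyzing $\tau$ at its critical boundary values. If $\tau(v^{*}) = 0$ then the ODE reduces to $\tau'(v^{*}) = b(v^{*})$; I would rewrite $b(v^{*})$ using the identity $v - (1-F(v))/f(v) = \phi(v)$ so that the sign of $\tau'(v^{*})$ is controlled by $\phi(v^{*})$ and $\bar{R}$. Myerson regularity gives monotonicity of $\phi$ (negative near $0$, positive near $\overline{v}$), and the bound $\bar{R} < -(n-k)\phi(0)$ from \Cref{lemma: rev and endpoint of virtual value} controls the size of $\bar{R}$; combining these across the few sign regimes of $\phi(v^{*})$ forces $\tau'(v^{*}) \ge 0$, so $\tau$ cannot dip below $0$. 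A symmetric analysis at $\tau(v^{*}) = \bar{R}$ shows $a(v^{*})\bar{R} + b(v^{*}) \le 0$, preventing $\tau$ from exceeding $\bar{R}$. These case analyses are where the regularity assumption and the virtual-value bound are used most heavily, and I expect this step to be the main technical obstacle; the rest of the argument consists of order-statistic bookkeeping and verification that the resulting payment rule satisfies the asserted properties.
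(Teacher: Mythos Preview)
Your ansatz and the reduction to a first-order linear ODE for $\tau$ match the paper exactly; the paper calls the solution $\psi$. The gap is in the barrier argument you propose for $\tau\in[0,\bar R]$.

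For the lower bound you want $\tau(v^*)=0\Rightarrow\tau'(v^*)\ge 0$. At such a point the ODE gives $\tau'(v^*)\,Q(v^*)=z'(v^*)-\bar R\,G(v^*)$; after canceling the common order-statistic factor, the sign is that of $v^*(n-k)(1-F(v^*))-\bar R\,F(v^*)$. This is indeed positive near $0$ (that is exactly where the bound $\bar R<-(n-k)\phi(0)$ enters), but it becomes negative for $v^*$ near $\bar v$, so the barrier inequality fails on part of the domain and you cannot rule out a downward zero-crossing by this device alone. The upper barrier has the same defect: at $\tau(v^*)=\bar R$ the ODE gives $\tau'(v^*)\,Q(v^*)=z'(v^*)-\bar R\,Q'(v^*)$, which is likewise not globally $\le 0$ (for instance wherever $Q'\le 0$).

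The paper's route is different on both sides. For $\psi\ge 0$ it solves the ODE explicitly via an integrating factor, $\psi=N/D$ with $D\ge 0$, and shows $N\ge 0$ by proving $N(0)=N(\bar v)=0$ together with unimodality of $N$; the unimodality comes from rewriting $N'(u)$ as a positive factor times $\int_0^u f(x)\bigl[-(n-k)\phi(x)-\bar R\bigr]\,dx$ and invoking regularity of $\phi$. For $\psi\le\bar R$ it analyzes interior \emph{critical points} of $\psi$ (where $\psi'=0$, not where $\psi=\bar R$), splitting into the regions $F(v^*)\gtrless (n-k)/n$. The key extra ingredient there is the revenue bound $\bar R\le k\,F^{-1}\!\bigl((n-k)/n\bigr)$, proved by a single-crossing argument with virtual values; this is strictly sharper than Lemma~\ref{lemma: rev and endpoint of virtual value} and is what actually closes both cases. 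So your plan is on track through the ODE, but the invariance step needs these two additional pieces rather than a pointwise barrier check.
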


\begin{proof}
Let $\bar R =\E[\Rev]$ be the expected total revenue of any BIC mechanism implementing the efficient allocation. We define a payment rule that distributes the target revenue $\bar{R}$ among the top $k$ bidders as a function of the $k$-th order statistic, $v_{(k)}$:
$$P_{(i)}(\bv) = \frac{\bar R - \psi(v_{(k)})}{k-1} \text{ for } i=1, \dots, k-1 \quad \text{ and } \quad P_{(k)}(\bv) =\psi(v_{(k)})$$
and the remaining losers pay zero. 

By construction, the total ex-post revenue is $\sum_{j=1}^n P_{(j)}(\bv) = (k-1) \frac{\bar R - \psi(v_{(k)})}{k-1} + \psi(v_{(k)}) = \bar R$ for every realization of the random variables, so the revenue variance is identically zero. We must now show that $\psi(v)$ can be chosen to satisfy RET, and that $0 \leq \psi(v) \leq \bar R$ so that all payments are non-negative.

A bidder with value $v$ is the $k$-th highest bidder with probability $Q(v) = \binom{n-1}{k-1} (1-F(v))^{k-1} F(v)^{n-k}$. Let $\tilde{G}(u) = \binom{n-1}{k-2, 1, n-k} (1-F(u))^{k-2} F(u)^{n-k} f(u)$ be the probability density that some other bidder is the $k$-th highest at value $u$ and write: $\tilde{G}(u) = (k-1) G(u)$ where $G(u) = \binom{n-1}{k-1} (1-F(u))^{k-2} F(u)^{n-k} f(u)$.  The RET condition defining the interim expected payment $z(v)$ for the efficient allocation can be written as:
$$ z(v) = \int_0^v \left(\frac{\bar{R} - \psi(u)}{k-1}\right) (k-1) G(u) du + \psi(v) Q(v).$$
Differentiating with respect to $v$ and rearranging terms, we obtain a first-order linear ordinary differential equation for $\psi(v)$:
\begin{equation} \label{eq:ode_general}
\psi'(v) Q(v) + \psi(v) [Q'(v) - G(v)] = z'(v) - \bar{R} G(v).
\end{equation}
By algebraic manipulation, $Q'(v) - G(v) = \binom{n-1}{k-1} F(v)^{n-k-1} (1-F(v))^{k-2} f(v) [n - k - n F(v)]$. The ODE admits an exact solution via the integrating factor method:
\begin{equation}\label{eq: solution ODE}
    \psi(v) = \frac{\int_0^v [z'(u) - \bar{R} G(u)] (1 - F(u)) du}{\binom{n-1}{k-1} F(v)^{n-k} (1 - F(v))^k} = \frac{N(v)}{D(v)}    .
\end{equation}
To complete the proof, we must show that $0 \leq \psi(v) \leq \bar R$ for every $v$, so that payments are non-negative.

\medskip

\noindent \emph{Step 1 ($0 \leq \psi$):}  We show this in three steps: (i) $\psi(0) = 0$, (ii) the numerator $N(v)$ is unimodal, and (iii) $N(\overline{v}) = 0$.

We first show $\psi(0) = 0$. Since $N(0) = D(0) = 0$, we evaluate the limit using L'Hôpital's rule: $\lim_{v \downarrow 0} \psi(v) = \lim_{v \downarrow 0} \frac{N'(v)}{D'(v)}$. We show this limit is zero in Lemma \ref{lemma: psi 0} in Appendix \ref{sec: appendix proof multi-item interim IR}.

Next, we show the numerator is unimodal. We can write the derivative of the numerator as follows (see the proof of Lemma \ref{lemma: psi 0} in Appendix \ref{sec: appendix proof multi-item interim IR}):
$$N'(u) = \binom{n-1}{k-1} F(u)^{n-k-1} (1-F(u))^{k-1} f(u) \int_0^u f(x) [-(n-k)\phi(x)  - \bar R ] dx,$$
where the integral substitution uses the identity $\frac{d}{dx}[x(1-F(x))] = 1-F(x)-xf(x) = -f(x)\phi(x)$.  Since the virtual value is monotonically increasing, the bracketed term in the integrand decreases from a strictly positive quantity to a strictly negative quantity by Lemma \ref{lemma: rev and endpoint of virtual value}. That is, $N$ is unimodal.

Finally, since $N$ is unimodal and $N(0)=0$, it suffices to show that $N(\overline{v}) = 0$. First, integration by parts reveals that $\int_0^{\overline{v}} z'(u) (1-F(u)) du = \int_0^{\overline{v}} z(u) f(u) du = \E[z(v)]$. Since the mechanism is efficient, the ex-ante expected payment of a single bidder is $\bar{R}/n$. 
For the second integral, using the substitution $y=F(u)$ and the Beta function identity $\int_0^1 y^{n-k} (1-y)^{k-1} dy = \frac{(n-k)!(k-1)!}{n!}$:
$$ \int_0^{\overline{v}} \bar{R}G(u) (1-F(u)) du = \bar{R} \binom{n-1}{k-1} \int_0^1 y^{n-k} (1-y)^{k-1} dy = \frac{\bar{R}}{n}.$$
Thus, $N(\overline{v}) = \frac{\bar{R}}{n} - \frac{\bar{R}}{n} = 0$. It follows $N(v) \ge 0$ for all $v$, concluding the proof of Step 1.

\medskip

\noindent \emph{Step 2 ($\psi \le \bar R$):} We have $\psi(0) = 0$ and $\lim_{v \to \overline{v}} \psi(v) = \frac{\bar R}{k} < \bar R$ by L'Hopital's rule:
$$ \lim_{v \to \overline{v}} \psi(v) = \lim_{v \to \overline{v}} \frac{N'(v)}{D'(v)} = \lim_{v \to \overline{v}} \frac{v(n-k)(1-F(v)) - \bar R F(v)}{n-k - n F(v)} = \frac{0 - \bar R}{-k} = \frac{\bar R}{k}. $$
Since $\psi$ is differentiable everywhere by the maintained hypotheses, any violation of the upper bound $\bar R$ must occur at an interior critical point $\psi'(v^{*}) = 0$. We show there is no such violation by considering the following two cases.\\

\noindent \emph{Case 1: $F(v^*) \geq \frac{n-k}{n}$.} 
In this region, $Q'(v^*) - G(v^*) \leq 0$. Substituting the critical point condition $\psi'(v^*) = 0$ into ODE \eqref{eq:ode_general}, the upper-bound condition $\psi(v^*) \leq \bar R$ simplifies to $z'(v^*) \geq \bar{R} Q'(v^*)$. 

For the sub-region $F(v^*) \geq \frac{n-k}{n-1}$, we have $Q'(v^*) \leq 0$. Since $\bar{R} > 0$ and $z'(v^*) \ge 0$, the inequality $\bar R Q'(v^*) \leq 0 \leq z'(v^*)$ holds trivially. 

For the sub-region $F(v^*) \in [\frac{n-k}{n}, \frac{n-k}{n-1})$, $Q'(v^*) > 0$, so we can divide by $Q'(v^*)$ to rewrite the necessary condition as:
\begin{equation} \label{eq:R_bound}
\bar R \leq \frac{z'(v^*)}{Q'(v^*)} = \frac{v^* (n-k)(1-F(v^*))}{n-k - (n-1)F(v^*)} \equiv Y(v^*).
\end{equation}

To show that this condition holds for all $v^*$ in the specified range, we first prove that the expression on the right-hand side, $Y(v^*)$, is strictly monotonically increasing. Differentiating $Y(v)$  yields:
$$\begin{aligned}
Y'(v) &= \frac{(n-k)[1-F(v) - v f(v)](n-k-(n-1)F(v)) - v(n-k)(1-F(v))[-(n-1)f(v)]}{[n-k - (n-1)F(v)]^2} \\
&= \frac{(n-k)(1-F(v))(n-k-(n-1)F(v)) + v(n-k) f(v) (k-1)}{[n-k - (n-1)F(v)]^2}.
\end{aligned}$$
Since $k \ge 2$ and $F(v) < \frac{n-k}{n-1}$ in this sub-region, every algebraic term in the numerator is strictly positive so $Y'(v) > 0$.
Because $Y(v^*)$ is strictly increasing in this interval, its minimum occurs at the lower boundary, $v_0 = F^{-1}(\frac{n-k}{n})$. Substituting $F(v_0) = \frac{n-k}{n}$ into $Y(v_0)$ simplifies the fraction exactly: 
$$ Y(v_0) = \frac{v_0 (n-k) \left(1-\frac{n-k}{n}\right)}{n-k - (n-1)\left(\frac{n-k}{n}\right)} = \frac{v_0 (n-k) \left(\frac{k}{n}\right)}{\frac{n-k}{n}} = k v_0.$$
Thus, to satisfy equation \eqref{eq:R_bound} for all $v^*$ in the interval, it is sufficient to prove the revenue bound $\bar R \le k v_0$.

To prove this, let $x(v)$ be the interim probability of allocation. By Myerson's lemma, we can write: $\bar R = n \int_0^{\overline{v}} \phi(v) x(v) f(v) dv$. Also by Myerson's lemma:$\int_{v_0}^{\overline{v}} \phi(v) f(v) dv = v_0 (1-F(v_0))$ (see Appendix \ref{sec: useful facts for regular distributions}) which allows us to write: 
$$ \bar R - k v_0 = n \int_0^{\overline{v}} \phi(v) f(v) [ x(v) - \mathbf{1}_{v > v_0} ] dv.$$

Observe that since $0 \leq x(v) \leq 1$, the bracketed term $x(v) - \mathbf{1}_{v > v_0}$ has a single-crossing property at $v_0$: it is non-negative for $v < v_0$ and non-positive for $v > v_0$. Moreover, the virtual value $\phi(u)$ is monotonically increasing. Replacing $\phi(u)$ with the constant $\phi(v_0)$ strictly increases the value of the integral (by shifting the negative mass upwards and the positive mass downwards). Hence:
$$ \bar R - k v_0 \leq \phi(v_0) n \int_0^{\overline{v}} f(v) [ x(v) - \mathbf{1}_{v > v_0} ] dv = 0,$$
since $\int_0^{\overline{v}} f(v) x(v) dv = k/n$ since it is the ex-ante probability of allocation of any given bidder and $\int_{v_0}^{\overline{v}} f(v) dv = 1-F(v_0)=k/n$.

This completes the proof that $\bar R \le k v_0$, validating the upper bound constraint $\psi(v^*) \le \bar R$ in this region.

\medskip

\noindent \emph{Case 2: $F(v^*) < \frac{n-k}{n}$.} 
We rewrite the ODE in \eqref{eq:ode_general} by dividing by the strictly positive function $Q(v)$: 
$$ \psi'(v) = \frac{Q'(v)-G(v)}{Q(v)} [H(v) - \psi(v)] \qquad \text{where} \qquad H(v) = \frac{z'(v) - \bar{R}G(v)}{Q'(v)-G(v)}.$$
The function $H(v)$ acts as a moving target that $\psi(v)$ chases. At any critical point $v^*$ where $\psi'(v^*) = 0$, the function intersects the target, so $\psi(v^*) = H(v^*)$. To determine the nature of these critical points, we differentiate the ODE with respect to $v$ to find the second derivative at the critical point. By the product rule:
$$ \psi''(v^*) = \left[\frac{d}{dv} \left( \frac{Q'-G}{Q} \right)\right]_{v^*} [H(v^*) - \psi(v^*)] + \frac{Q'(v^*)-G(v^*)}{Q(v^*)} [H'(v^*) - \psi'(v^*)].$$
Since $H(v^*) - \psi(v^*) = 0$ and $\psi'(v^*) = 0$ at the critical point, the first term vanishes and the expression simplifies exactly to:
$$ \psi''(v^*) = \frac{Q'(v^*)-G(v^*)}{Q(v^*)} H'(v^*).$$
In this sub-region, $F(v^*) < \frac{n-k}{n}$, meaning the numerator of the pre-factor $Q'(v^*) - G(v^*) > 0$. Since the probability $Q(v^*) > 0$ for all interior points, the entire pre-factor is strictly positive. Thus, the sign of the second derivative $\psi''(v^*)$ is determined entirely by the sign of $H'(v^*)$.

To evaluate $H'(v)$, we first simplify $H(v)$ by canceling the common terms $\binom{n-1}{k-1} F^{n-k-1}(1-F)^{k-2}f$ from the numerator and denominator of the target function:
$$ H(v) = \frac{v(n-k)(1-F(v)) - \bar{R}F(v)}{n-k - n F(v)}.$$
Applying the quotient rule and simplifying the resulting cross-terms in the numerator, we obtain:
$$ H'(v) = \frac{(n-k)}{(n-k-nF(v))^2} \left[ (1-F(v))(n-k-nF(v)) + f(v)(k v - \bar R) \right].$$
We express the density $f(v)$ in terms of the Myerson virtual value via the identity $f(v) = \frac{1-F(v)}{v-\phi(v)}$. Substituting this into the brackets and factoring out the strictly positive term $\frac{1-F(v)}{v-\phi(v)}$ yields:
$$ H'(v) = \frac{(n-k)(1-F(v))}{(n-k-nF(v))^2 (v-\phi(v))} K(v) \text{ for } K(v) =  (v-\phi(v))(n-k-nF(v)) + k v - \bar R .$$
The sign of $H'(v)$ is determined solely by the sign of  $K(v)$, since all the fractional pre-factors are strictly positive for $v>0$. We establish that $K(v) > 0$ for all $v$ in this region by examining its derivative:\footnote{Note that $\phi$ is differentiable by the maintained assumptions on $f$.}
$$ K'(v) = (1-\phi'(v))(n-k-nF(v)) - n f(v)(v-\phi(v)) + k.$$
Using the identity $f(v)(v-\phi(v)) = 1-F(v)$, this simplifies to:
$$ K'(v) = (n-k-nF(v)) - \phi'(v)(n-k-nF(v)) - n(1-F(v)) + k = - \phi'(v)(n-k-nF(v)).$$
Since we are in the region $F(v) < \frac{n-k}{n}$, the multiplier $(n-k-nF(v))$ is strictly positive. By the regularity of $F$, the virtual value is monotonically increasing, so $\phi'(v) > 0 $. Therefore, $K'(v) < 0$.

Because $K(v)$ is monotonically decreasing, its minimum on the interval occurs at the right boundary, $v_0 = F^{-1}(\frac{n-k}{n})$. At this boundary, the first term of $K(v)$ vanishes:
$$ K(v_0) = (v_0-\phi(v_0))(0) + k v_0 - \bar R = k v_0 - \bar R.$$
By the strict revenue inequality $\bar R \le k v_0$ proven in Case 1, we have $K(v_0) > 0$. Since $K(v)$ is decreasing toward a non-negative minimum, $K(v) > 0$ everywhere in the interior of this region.

Consequently, $H'(v^*) > 0$, which implies the second derivative $\psi''(v^*) > 0$. This confirms that any critical point in this region must be a strict local minimum, not a maximum. Since there are no local maxima, the function cannot exceed the upper bound $\bar R$.
\end{proof}

\subsection{Ex-Post Individually Rational Mechanisms}

Under interim IR, we showed we can go as far as zero revenue variance while guaranteeing the mechanism never pays an agent. The mechanism, however, relied on the satisfaction of interim IR only; indeed, it can be readily seen that our example exhibits regions where the highest-value agent may be required to pay more than her value.

We now turn to settings with ex-post IR and investigate whether the conclusion of \citet{waehrer1998auction} holds when we sell more than one item. Recall that the proof technique of \citet{waehrer1998auction} strongly relied on at most one agent having a non-zero payment, which had convenient mathematical consequence of eliminating the cross terms in the second moment of the revenue  $\E[\Rev^2] = \E[\sum_i P_i^2 + \sum_{i \neq j} P_i P_j]$.

In Section \ref{sec:wpb_not_optimal} in the appendix, we show that even in the simplest environment of three bidders with i.i.d. uniform values and two items, the winner-pays-bid rule is no longer variance-minimizing. The key to reducing revenue variance is to raise the payment of the highest bidder and lower that of the second-highest bidder, thereby reducing covariance terms in total revenue.

While we are unable to characterize the optimal mechanism analytically even in this simple setting with three bidders and two items, we characterize its key properties and show how it can be computed numerically as the solution to a quadratic program. We refer the reader to Section \ref{sec:ex_post_multi_charact}  for details.

\subsection{Winner-Pays-Bid vs $(k+1)$-st Price Auction}

While the winner-pays-bid format is not variance-minimizing, the question remains how it compares to other standard formats in the multi-item case. In this section, we show that it attains lower variance of revenue than the $(k+1)$-st price auction. In what follows, let $\babic(\bv)$ be the efficient allocation rule.

To compress notation, we define the \emph{uniform price} for a value profile $\bv$ to be the $(k+1)$-st highest value: $p^u(\bv) \equiv v_{(k + 1)}$. In the \textit{uniform $(k{+}1)$-price auction}, each winner pays the uniform price $p^u(\bv)$ and losers pay $0$: 
\begin{equation*}
    \pbic_i^{U}(\bv) \equiv \abic_i(\bv)\,p^u(\bv) \qquad R^{U}(\bv) \equiv \sum_{i=1}^n \pbic_i^{U}(\bv)= k\,p^u(\bv).
\end{equation*}

We focus on the symmetric Bayesian Nash equilibrium in strictly increasing bidding strategies in winner-pays-bid formats. By the Revenue Equivalence Theorem, the equilibrium bidding function has the form $b(v_i) = \E[p^u(\bv) | v_i = v_i, \abic_i(\bv) = 1]$, for any $v_i > 0$; by individual rationality, $b(0) = 0$. Payments and revenue in equilibrium are 
\begin{equation*}
    \pbic_i^{D}(v) \equiv \abic_i(\bv)\,b(v_i) \qquad R^{D}(v)\equiv \sum_{i=1}^n \pbic_i^{D}(v).
\end{equation*}

\begin{theorem}\label{th:varcomp}
    Consider an i.i.d. environment with $1 \leq k < n$. In the symmetric strictly increasing equilibrium of the discriminatory winner-pays-bid auction,
    \begin{equation*}
        \Var[R^D(\bv)] \leq \Var[R^U(\bv)].
    \end{equation*}
\end{theorem}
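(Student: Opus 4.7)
The plan is to reduce the variance comparison to a second-moment comparison via the Revenue Equivalence Theorem and then prove $\E[(R^D)^2] \leq \E[(R^U)^2]$ through a two-step sandwich that bypasses any direct comparison of covariances across bidders. Since both mechanisms implement the efficient allocation, RET gives $\E[R^D] = \E[R^U]$, so the variance inequality is equivalent to the corresponding second-moment inequality.

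The core step is a pair of pointwise relations between total revenue and the sum of squared per-bidder payments. On the uniform side, all $k$ winners pay exactly $v_{(k+1)}$ and $\sum_i \abic_i(\bv) = k$, yielding the identity $(R^U(\bv))^2 = k^2 v_{(k+1)}^2 = k \sum_i (\pbic_i^U(\bv))^2$. On the WPB side, the Cauchy--Schwarz inequality applied to the at most $k$ nonzero summands gives the pointwise bound $(R^D(\bv))^2 \leq k \sum_i (\pbic_i^D(\bv))^2$. These two relations together reduce the desired inequality to the per-bidder comparison $\sum_i \E[(\pbic_i^D)^2] \leq \sum_i \E[(\pbic_i^U)^2]$. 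The per-bidder inequality follows from the characterization $b(v_i) = \E[v_{(k+1)} \mid v_i, \abic_i(\bv) = 1]$ stated in the theorem's preamble (itself a consequence of RET): conditional Jensen on $x \mapsto x^2$ yields $b(v_i)^2 \leq \E[v_{(k+1)}^2 \mid v_i, \abic_i(\bv) = 1]$, and multiplying by $\Pr(\abic_i(\bv)=1 \mid v_i)$ and integrating gives $\E[b(v_i)^2 \abic_i(\bv)] \leq \E[v_{(k+1)}^2 \abic_i(\bv)]$. Chaining these pieces yields $\E[(R^D)^2] \leq k^2 \E[v_{(k+1)}^2] = \E[(R^U)^2]$.

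The delicate aspect, and the reason a direct bidder-by-bidder approach would stall, is the off-diagonal comparison: the cross terms $\E[b(v_i) b(v_j) \abic_i \abic_j]$ and $\E[v_{(k+1)}^2 \abic_i \abic_j]$ do not admit a clean Jensen argument because $b(v_i) b(v_j)$ is not a conditional expectation of $v_{(k+1)}^2$. The pointwise Cauchy--Schwarz step is what circumvents this entirely: it is tight for $R^U$ (all winners' payments are equal) and slack for $R^D$ (winners' bids differ), and the slack is exactly what absorbs the Jensen gap on the diagonal. This is also where the i.i.d.\ assumption enters—via the single symmetric bidding function $b$—so the same argument would not translate directly to asymmetric environments.
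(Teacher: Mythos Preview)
Your proposal is correct and follows essentially the same route as the paper: Cauchy--Schwarz on the at-most-$k$ nonzero summands to get $(R^D)^2 \le k\sum_i (\pbic_i^D)^2$, then conditional Jensen on $b(v_i)=\E[v_{(k+1)}\mid v_i,\abic_i=1]$ to obtain $\E[\abic_i\,b(v_i)^2]\le \E[\abic_i\,v_{(k+1)}^2]$, and finally summing and using $\sum_i \abic_i=k$ together with RET. Your observation that Cauchy--Schwarz is an equality on the uniform side (equal payments) is a nice way to frame why the slack lands exactly where it is needed, but the underlying argument matches the paper's proof step for step.
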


\begin{proof}[Proof of Theorem \ref{th:varcomp}]
    By efficiency of the allocation, exactly $k$ agents obtain the good. By the Cauchy-Schwarz inequality,
    \begin{equation*}
        R^D(\bv)^2 = \left( \sum_{i=1}^n \abic_i(\bv) b(v_i) \right)^2 \leq k \sum_{i=1}^n \abic_i(\bv) b(v_i)^2 \qquad a.s.
    \end{equation*}
    Fix an arbitrary agent $i$. By conditional Jensen's inequality,
    \begin{equation*}
        b(v_i)^2 = \E[p^u(\bv) | v_i = v_i, \abic_i(\bv) = 1]^2 \leq \E[p^u(\bv)^2 | v_i = v_i, \abic_i(\bv) = 1] \qquad a.s.
    \end{equation*}
    Multiplying both sides by $A_i(\bv)$ and taking expectations yields
    \begin{equation*}
        \E[\abic_i(\bv) \, b(v_i)^2] \leq \E[ \abic_i(\bv) \, \E[p^u(\bv)^2 | v_i = v_i, \abic_i(\bv) = 1] ].
    \end{equation*}
    By the law of iterated expectations,
    \begin{equation*}
        E[\abic_i(\bv) \, p^u(\bv)^2] = \E[ \, \E[\abic_i(\bv) \, p^u(\bv)^2 | v_i, \abic_i(\bv)]].
    \end{equation*}
    Note that
    \begin{equation*}
        \E[\abic_i(\bv) \, p^u(\bv)^2 | v_i, \abic_i(\bv) = 1] = \E[p^u(\bv)^2 | v_i, \abic_i(\bv) = 1] \text{ and } \E[\abic_i(\bv) \, p^u(\bv)^2 | v_i, \abic_i(\bv) = 0] = 0.
    \end{equation*}
    Hence,
    \begin{equation*}
        \E[\abic_i(\bv) \, p^u(\bv)^2 | v_i, \abic_i(\bv)] = \abic_i(\bv) \, \E[ p^u(\bv)^2 | v_i, \abic_i(\bv) = 1]
    \end{equation*}
    and so
    \begin{equation*}
        \E[ \abic_i(\bv) \, \E[p^u(\bv)^2 | v_i, \abic_i(\bv) = 1] ] = E[\abic_i(\bv) \, p^u(\bv)^2].
    \end{equation*}

    Now it follows that
    \begin{equation*}
        \E[R^D(\bv)^2] \leq k \sum_{i=1}^n \E[\abic_i(\bv) \, b(v_i)^2] \leq k \sum_{i=1}^n \E[\abic_i(\bv) \, p^u(\bv)^2].
    \end{equation*}
    Since $\sum_{i=1}^n \abic_i(\bv) = k$ a.s., we have $\sum_{i=1}^n \abic_i(\bv) p^u(\bv) = k p^u(\bv)$ a.s., from which it follows that
    \begin{equation*}
        k \sum_{i=1}^n \E[\abic_i(\bv) p^u(\bv)^2] = k \E[k p^u(\bv)^2] = \E[(k p^u(\bv))^2] = \E[R^U(\bv)^2].
    \end{equation*}
    Therefore, $\E[R^D(\bv)^2] \leq \E[R^U(\bv)^2]$. Since $\E[R^U(\bv)] = \E[R^D(\bv)]$, the result follows.
\end{proof}

A corresponding result for i.i.d.\ environments with log-concave valuation distributions was previously obtained by \citet{pekevc2025variance}. Our result dispenses with the log-concavity assumption and therefore applies to a broader class of environments.

\section{Conclusion}\label{sec: conclusion}

This paper studies which ex-post payment rules, among those that implement a given allocation and satisfy various participation constraints, minimize revenue variance. Moving beyond the canonical single-item setting with ex-post individual rationality, where the winner-pays-bid format minimizes any convex risk measure of revenue \citep{waehrer1998auction}, we analyze environments with interim participation constraints and multiple items. We show that the revenue-variance optimality of winner-pays-bid formats breaks down in these richer settings and provide mechanisms that are variance-minimizing instead.

Several questions remain for future research. In multi-item environments, it would be natural to study heterogeneous-item settings, such as position auctions. More broadly, in both homogeneous- and heterogeneous-item environments, it would be valuable to understand whether the ranking of winner-pays-bid formats and the Vickrey-Clarke-Groves mechanism extends beyond revenue variance to other convex risk measures.

\bibliographystyle{ACM-Reference-Format}
\bibliography{references}

\appendix

\section{Useful Facts for Regular Distributions}\label{sec: useful facts for regular distributions}

For regular distributions, Myerson's lemma (\cite{Myerson1981}) states that the expected revenue is the same as the expected virtual value: $\E[z_i(v_i)] = \E[x_i(v_i) \phi_i(v_i)]$. In particular, if we consider the auction that allocates only when $v_i \geq r$ for a reserve price $r$ it holds that:
\begin{equation*}
    r (1-F(r)) = \int_r^{\overline{v}} \phi(v) f(v) dv.
\end{equation*}
and in particular for $r=0$ we have:
\begin{equation*}
    \int_0^{\overline{v}} \phi(v) f(v) dv = 0.
\end{equation*}

Finally, we prove Lemma \ref{lemma: rev and endpoint of virtual value}.

\begin{proof}[Proof of Lemma \ref{lemma: rev and endpoint of virtual value}]
    By RET, the expected revenue $\bar{R}$ is equal to the expected virtual surplus of the winners:
    \begin{equation*}
        \bar{R} = \mathbb{E}\left[\sum_{i=1}^k \phi(v_{(i)})\right].
    \end{equation*}
    By linearity of expectation and the fact that ex-ante expectation of virtual values is zero (see Appendix \ref{sec: useful facts for regular distributions}), the sum of the expected virtual values of all $n$ order statistics must equal $n$ times the expected virtual value of a single bidder: $\mathbb{E}\left[\sum_{i=1}^n \phi(v_{(i)})\right] = n \cdot \mathbb{E}[\phi(v)] = 0$. We can decompose this total sum into the $k$ winners and the $(n-k)$ losers:
    \begin{equation*}
        \underbrace{\mathbb{E}\left[\sum_{i=1}^k \phi(v_{(i)})\right]}_{\bar{R}} + \mathbb{E}\left[\sum_{i=k+1}^n \phi(v_{(i)})\right] = 0
    \end{equation*}
    Rearranging this identity expresses the revenue entirely in terms of the losing bidders:
    \begin{equation}\label{eq:rev_losers}
        \bar{R} = \mathbb{E}\left[\sum_{i=k+1}^n -\phi(v_{(i)})\right] 
    \end{equation}
    By regularity, $-\phi(v)$ is monotonically decreasing. Since the support is bounded below by $0$, the order statistics are non-negative ($v_{(i)} \geq 0$). By monotonicity, and since losing order statistics are strictly positive almost surely ($v_{(i)} > 0$ for $i > k$), the maximum possible value of $-\phi(v)$ occurs at the lower bound: $-\phi(v_{(i)}) < -\phi(0)$ for all $i \in \{k+1, \dots, n\}$ almost surely. Substituting this upper bound into equation (\ref{eq:rev_losers}) yields:
    \begin{equation*}
        \bar{R} < \mathbb{E}\left[\sum_{i=k+1}^n -\phi(0)\right] = -(n-k)\phi(0).
    \end{equation*}
\end{proof}

\section{Risk-Minimizing Ex-Post IR Mechanisms}\label{sec: risk-minimizing ex-post IR}

We begin with the proof of Theorem \ref{thm:expost-ir}. Subsequently, we provide direct corollaries of the result.

\begin{proof}[Proof of Theorem \ref{thm:expost-ir}.]
Given the revelation principle, it is without loss of generality to focus on BIC mechanisms. Let $\bpbic(\bv)$ be any payment rule implements $\babic(\bv)$ and let $\bpbic^{\WPB}(\bv)$ be the winner-pays-bid payment rule associated with $\babic(\bv)$. We know by \eqref{eq:ret} that $\E[\sum_j \pbic_j(\bv)] = \E[\sum_j \pbic^{\WPB}_j(\bv)]$. We want to show that $$\textstyle \E[g(\sum_j \pbic_j(\bv))] \geq \E[g(\sum_j \pbic^{\WPB}_j(\bv))]$$

We will focus on one bidder at a time and show the inequality above conditioned on $i$ having value $v_i$ and being the winner ($\abic_i(\bv) = 1$). By ex-post IR, we know that in that case: $\sum_j \pbic_j(\bv) = \pbic_i(\bv)$. Hence we have:

$$\begin{aligned}
\textstyle \E[g(\sum_j \pbic_j(\bv)) \mid v_i, \abic_i(\bv) =1 ] & = \E[g(\pbic_i(\bv)) \mid v_i, \abic_i(\bv) =1 ]
\\ & \geq g(\E[\pbic_i(v) \mid v_i, \abic_i(\bv) =1 ])
\end{aligned}$$
Where the first step is because of restriction to ex-post IR payment rules, and the second step is by Jensen's inequality. Now we observe that:
$$\E[\pbic_i(\bv) \mid v_i, \abic_i(\bv) =1 ] = \frac{\E[\pbic_i(\bv) \abic_i(\bv) \mid v_i ]}{\P[\abic_i(\bv) =1 \mid v_i]} = \frac{z_i(v_i)}{x_i(v_i)} = \E[\pbic_i^\WPB(v_i) \mid v_i, \abic_i(\bv) = 1]$$
where the second equality again relies on ex-post IR, since $\pbic_i(\bv) \abic_i(\bv) = \pbic_i(\bv)$. Plugging that in the previous equation we have:
$$\begin{aligned}
\textstyle \E[g(\sum_j \pbic_j(\bv)) \mid v_i, \abic_i(\bv) =1 ] 
& \geq g(\E[\pbic_i^\WPB(\bv) \mid v_i, \abic_i(\bv) =1 ])
\\ & = \E[g(\pbic_i^\WPB(\bv)) \mid v_i, \abic_i(\bv) =1 ])
\\ & = \textstyle \E[g(\sum_j \pbic_j^\WPB(\bv)) \mid v_i, \abic_i(\bv) =1 ])
\end{aligned}$$
The equality in the second line is the case where Jensen's inequality holds with equality since $\pbic_i^\WPB(\bv)$ is completely determined when conditioned on $v_i, \abic_i(\bv)=1$. Finally the last equality is due to the fact that only the winner pays in WPB.

Taking expectations over $v_i, \abic_i(\bv) = 1$, summing over all bidders and applying the law of total expectation concludes the proof.
\end{proof}

\begin{corollary}\label{cor:variance_minimization} In the setting of Theorem~\ref{thm:expost-ir}, WPB minimizes the variance $\Var(R) = \E[(R-\E[R])^2]$ among ex-post IR mechanisms.
\end{corollary}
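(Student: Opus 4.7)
The plan is to reduce the corollary directly to Theorem \ref{thm:expost-ir} by choosing an appropriate convex function $g$. The key observation is that the Revenue Equivalence Theorem \eqref{eq:ret} pins down the expected payment of each agent conditional on type, and therefore fixes the total expected revenue $\bar R := \E[R]$ across all payment rules (ex-post IR or otherwise) that implement the given allocation $\babic(\bv)$. In particular, $\bar R$ is a constant that does not depend on which ex-post IR payment rule we select.

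First, I would recall that $\Var(R) = \E[(R - \bar R)^2]$, where $\bar R = \E[R]$ is the same for every ex-post IR payment rule implementing $\babic$. Next, I would define $g(x) = (x - \bar R)^2$, which is a convex function $\R \to \R$ (a shifted quadratic). Then $\E[g(R)] = \Var(R)$ for every such payment rule, and similarly $\E[g(R^{\WPB})] = \Var(R^{\WPB})$ for the winner-pays-bid payment rule.

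Finally, applying Theorem \ref{thm:expost-ir} to this particular convex $g$ yields
\begin{equation*}
\Var(R^{\WPB}) \;=\; \E[g(R^{\WPB})] \;\le\; \E[g(R)] \;=\; \Var(R),
\end{equation*}
which is the desired conclusion. There is essentially no obstacle here; the only subtlety is noting that $\bar R$ is common to all admissible payment rules (so that the shifted quadratic is well-defined as a fixed convex function, independent of the mechanism), which is precisely what revenue equivalence guarantees.
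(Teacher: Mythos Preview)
Your argument is correct and essentially the same as the paper's: both reduce to Theorem~\ref{thm:expost-ir} with a quadratic convex function and invoke \eqref{eq:ret} to fix $\E[R]$ across all implementing payment rules. The only cosmetic difference is that the paper decomposes $\Var(R)=\E[R^2]-\E[R]^2$ and applies the theorem with $g(x)=x^2$, whereas you fold the mean shift into $g(x)=(x-\bar R)^2$ directly; these are equivalent.
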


\begin{proof}
We can decompose $\Var(R) = \E[R^2] - \E[R]^2$. By the Revenue Equivalence Theorem \eqref{eq:ret}, the second term is the same across all payment rules. By the previous theorem with $g(x) = x^2$, the first term is minimized for WPB.
\end{proof}

\begin{corollary}\label{cor:other_metrics} In the setting of Theorem~\ref{thm:expost-ir}, WPB minimizes  the risk of each individual bidder's payment $\E[g(P_i)]$, the risk of each individual bidder's $\E[g(U_i)]$ and the risk of the sum of all the bidders' utilities $\E[g(U)]$ among all ex-post IR payment rules.
\end{corollary}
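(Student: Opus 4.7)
The plan is to adapt the conditioning-and-Jensen argument from the proof of Theorem \ref{thm:expost-ir} to each of the three quantities. The universal template is: (i) note that under ex-post IR with non-negative payments the losing event $\{\abic_i(\bv) = 0\}$ forces $P_i = 0$ and $U_i = 0$ for any admissible payment rule, so on that event the quantities of interest are deterministic constants that match across payment rules; (ii) condition on $(v_i, \abic_i(\bv) = 1)$ and apply Jensen's inequality to an appropriate convex function of $P_i$; (iii) use Revenue Equivalence \eqref{eq:ret} together with ex-post IR (which gives $P_i = \abic_i(\bv) P_i$) to conclude that $\E[P_i \mid v_i, \abic_i(\bv) = 1] = z_i(v_i)/x_i(v_i) = b_i^{\WPB}(v_i)$; and (iv) exploit that under WPB the payment on the winning event is deterministic given $v_i$, so Jensen holds with equality for WPB.

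For the individual payment $\E[g(P_i)]$, I would apply Jensen directly to $g$ on the winning event to obtain $\E[g(P_i) \mid v_i, \abic_i(\bv) = 1] \geq g(b_i^{\WPB}(v_i)) = \E[g(P_i^{\WPB}) \mid v_i, \abic_i(\bv) = 1]$, and integrate (noting both sides equal $g(0)$ on the losing event). For the individual utility $\E[g(U_i)]$, I would apply Jensen to the convex function $x \mapsto g(v_i - x)$ (convex because $v_i - x$ is affine and $g$ is convex), yielding $\E[g(U_i) \mid v_i, \abic_i(\bv) = 1] \geq g(v_i - b_i^{\WPB}(v_i)) = \E[g(U_i^{\WPB}) \mid v_i, \abic_i(\bv) = 1]$; again the losing event contributes $g(0)$ to both sides, so integration concludes.

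For the aggregate utility $\E[g(U)]$, I would exploit that Theorem \ref{thm:expost-ir} is a single-item statement, so at most one bidder wins and hence $U = \sum_j U_j$ collapses to $v_W - R$ when some winner $W$ exists and to $0$ otherwise. Conditioning on the identity of the winner $W = i$ and on $v_i$, the problem reduces exactly to the $U_i$ analysis applied to that bidder, and the claim follows by the tower property. I do not anticipate a genuine technical obstacle; the only thing to track carefully is that non-negative payments plus ex-post IR collapse the losing event to a deterministic zero in all three cases, which is what lets the conditional-expectation argument align cleanly with WPB.
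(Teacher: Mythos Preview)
Your proposal is correct and follows essentially the same approach as the paper's proof: both rely on the observation that under ex-post IR with non-negative payments the loser's payment and utility vanish, condition on $(v_i,\abic_i(\bv)=1)$, apply Jensen, and use that WPB makes the relevant quantity deterministic on that event. The paper's proof is terser but identical in substance; your explicit identification of the convex function $x\mapsto g(v_i-x)$ for the utility case and the reduction of $U$ to $U_W$ via the single-item structure are exactly the details the paper leaves implicit.
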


\begin{proof}
The proof that WPB minimizes $\E[g(P_i)]$ is direct from the proof of Theorem \ref{thm:expost-ir}. The argument for utility and sum of utilities follows exactly the same proof format by observing that: (i) only the winner has non-zero utility; (ii) in WPB, the utility of an agent is completely determined when conditioned on $v_i$ and $\abic_i(\bv)=1$. With those observations the exact same argument carries through.
\end{proof}

\section{Proof of Theorem \ref{theorem: existence pi}}\label{appendix: existence of pi}

Let $M > 0$ and let $h: [0,M] \to [0,M]$ be a non-decreasing right-continuous function. Its generalized inverse is defined by
\begin{equation*}
    h^{-1}(y) \equiv \inf\{ s \in [0,M]: h(s) \geq y \}.
\end{equation*}

\begin{lemma}[Convergence of generalized inverses]\label{lemma: convergence of generalized inverse}
    Let $M > 0$ and $h_k, h: [0, M] \to [0,M]$ be non-decreasing right-continuous functions, for any $k \in \mathbb{N}$. Suppose $\lim_{k \to \infty} h_k(s) = h(s)$ for a dense set containing the set of continuity points of $h$. Also suppose $y \in [0,M]$ is a continuity point of $h^{-1}$ and $h^{-1}(y) > 0$. Then $\lim_{k \to \infty} h_k^{-1}(y) = h^{-1}(y)$.
\end{lemma}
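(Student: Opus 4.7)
The plan is to establish convergence by squeezing, showing separately that $\limsup_{k} h_k^{-1}(y) \le h^{-1}(y)$ and $\liminf_{k} h_k^{-1}(y) \ge h^{-1}(y)$. Write $x := h^{-1}(y)$ and let $D$ be the dense subset of $[0,M]$ on which $h_k \to h$ pointwise. The key observation is that continuity of $h^{-1}$ at $y$ is exactly the statement that the two one-sided limits $h^{-1}(y \pm \delta)$ both converge to $x$ as $\delta \downarrow 0$; geometrically, this says that $h$ exceeds $y$ arbitrarily near $x$ from the right and falls short of $y$ arbitrarily near $x$ from the left. Both facts transfer to $h_k$ via pointwise convergence on $D$.

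For the upper bound, I would fix $\epsilon > 0$ and use right-continuity of $h^{-1}$ at $y$ to pick $\delta > 0$ such that $s_0 := h^{-1}(y+\delta)$ lies in $[x, x + \epsilon/2)$. Right-continuity of $h$ yields $h(s_0) \ge y + \delta$, and monotonicity then gives $h(s) \ge y + \delta > y$ for every $s \ge s_0$. Density of $D$ produces an $s \in D \cap [s_0, x+\epsilon)$; since $h_k(s) \to h(s) > y$, eventually $h_k(s) \ge y$, whence $h_k^{-1}(y) \le s < x + \epsilon$.

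For the lower bound, the hypothesis $x > 0$ comes in by supplying room to probe values below $x$. Fixing $\epsilon \in (0,x)$ and applying left-continuity of $h^{-1}$ at $y$, choose $\delta > 0$ such that $t := h^{-1}(y-\delta) > x - \epsilon/2$. By the definition of the infimum defining $t$, $h(s) < y - \delta$ for every $s < t$. Density of $D$ furnishes $s \in D \cap (x-\epsilon, t)$; since $h_k(s) \to h(s) < y$, eventually $h_k(s) < y$, which forces $h_k^{-1}(y) \ge s > x - \epsilon$.

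The only nontrivial step is unpacking what continuity of $h^{-1}$ at $y$ gives us, namely the one-sided approximations $h^{-1}(y \pm \delta) \to x$ that allow one to select test points at which $h$ is strictly above or strictly below $y$. Once those test points are secured, the strict inequalities are automatically inherited by $h_k$ on $D$ for large $k$, and the argument closes immediately. I anticipate no further obstacle; the hypothesis $x > 0$ is purely to keep the lower-bound interval $(x - \epsilon, x)$ inside $[0,M]$, since otherwise the lower bound would hold vacuously from $h_k^{-1}(y) \ge 0$.
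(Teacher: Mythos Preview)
Your proposal is correct and follows essentially the same approach as the paper's proof: both split into $\limsup$ and $\liminf$ bounds, both use continuity of $h^{-1}$ at $y$ to locate points just to the right (resp.\ left) of $x=h^{-1}(y)$ at which $h$ is strictly above (resp.\ below) $y$, and both then pass the strict inequality to $h_k$ via pointwise convergence at a nearby point of $D$. The only cosmetic difference is that the paper selects its test point directly as a continuity point of $h$ (which lies in $D$ by hypothesis), whereas you first anchor at $s_0=h^{-1}(y+\delta)$ (resp.\ $t=h^{-1}(y-\delta)$) and then invoke density of $D$ to find $s$ in the relevant interval; the two constructions are equivalent and the remaining steps are identical.
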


\begin{proof}
    \noindent \emph{Step 1:} $\limsup_{k \to \infty} h_k^{-1}(y) \leq h^{-1}(y)$.
    \newline
    Fix an arbitrary $\epsilon > 0$. By the definition of $h^{-1}(y)$, we have $h(x) \geq y$. By right-continuity, and monotonicity, there is $x_{+} \in [h^{-1}(y), h^{-1}(y) + \epsilon)$ that is a continuity point of $h$ and satisfies $h(x_{+}) > y$. Indeed, such $x_{+}$ must exist because the assumption that $y$ is a continuity point of $h^{-1}$ ensures that $h$ cannot be constant at $y$ on an open right neighborhood of $h^{-1}(y)$. Then $h_k(x_{+}) \to h(x_{+}) > y$, so for all large enough $k$, we must have $h_k(x_{+}) \geq y$. Hence, for all sufficiently large $k$,
    \begin{equation*}
        h_k^{-1}(y) \leq x_{+} < h^{-1}(y) + \epsilon.
    \end{equation*}
    It follows that $\limsup_{k \to \infty} h_k^{-1}(y) \leq h^{-1}(y) + \epsilon$. Since this holds for any $\epsilon > 0$, we obtain
    \begin{equation*}
        \limsup_{k \to \infty} h_k^{-1}(y) \leq h^{-1}(y).
    \end{equation*}

    \bigskip

    \noindent \emph{Step 2:} $\liminf_{k \to \infty} h_k^{-1}(y) \geq h^{-1}(y)$.
    \newline
    Fix an arbitrary $\epsilon > 0$. Since $y$ is a continuity point of $h^{-1}$ and $h^{-1}(y) > 0$, there must be $x_{-} \in (h^{-1}(y) - \epsilon, h^{-1}(y))$ with $h(x_{-}) < y$. Then $h_k(x) \to h(x) < y$ as $k \to \infty$. Hence, for large enough $k$, we must have $h_k(x) < y$, which implies 
    \begin{equation*}
        h_k^{-1}(y) > x > h^{-1}(y) - \epsilon.
    \end{equation*}
    Hence, $\liminf_{k \to \infty} h_k^{-1}(y) \geq h^{-1}(y) - \epsilon$. Since this holds for any $\epsilon > 0$, it follows that
    \begin{equation*}
        \liminf_{k \to \infty} h_k^{-1}(y) \geq h^{-1}(y).
    \end{equation*}
\end{proof}

Next, we formally state the Schauder fixed point theorem and use it to show that a continuous operator mapping the space of probability measures on a compact interval to itself has a fixed point.

\begin{lemma}[\cite{schauder1930fixpunktsatz}, \cite{shapiro2016fixed} Theorem 7.1.]\label{lemma: schauder}
    Let $K$ be a nonempty compact convex subset of a normed linear space and let $f: K \to K$ be a continuous function. Then $f$ has a fixed point.
\end{lemma}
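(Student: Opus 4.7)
The classical strategy for proving Schauder's fixed point theorem is reduction to Brouwer's fixed point theorem via finite-dimensional approximation. Let $(X, \|\cdot\|)$ denote the ambient normed linear space. The plan proceeds in three stages: (i) construct, for each $\epsilon > 0$, a continuous ``Schauder projection'' $P_\epsilon: K \to K$ whose image lies in a finite-dimensional compact convex subset $K_\epsilon \subseteq K$ and which satisfies $\|P_\epsilon(x) - x\| \leq \epsilon$ for all $x \in K$; (ii) apply Brouwer's theorem to $P_\epsilon \circ f$ restricted to $K_\epsilon$ to obtain an approximate fixed point; and (iii) use compactness of $K$ together with continuity of $f$ to extract a genuine fixed point in the limit $\epsilon \to 0$.

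For stage (i), compactness of $K$ yields a finite $\epsilon$-net $\{x_1, \ldots, x_N\} \subseteq K$. Define the tent functions $\phi_i(x) = \max(0, \epsilon - \|x - x_i\|)$ and set
$$P_\epsilon(x) = \frac{\sum_{i=1}^N \phi_i(x)\, x_i}{\sum_{i=1}^N \phi_i(x)}.$$
The denominator is strictly positive because the balls $B(x_i, \epsilon)$ cover $K$, so $P_\epsilon$ is well-defined. Each $\phi_i$ is continuous, so $P_\epsilon$ is continuous, and its image lies in $K_\epsilon := \mathrm{conv}\{x_1, \ldots, x_N\} \subseteq K$ (using convexity of $K$), a compact convex subset of an affine subspace of dimension at most $N-1$. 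The key approximation estimate is immediate from the triangle inequality: only indices $i$ with $\|x - x_i\| < \epsilon$ contribute to the weighted average, hence $\|P_\epsilon(x) - x\| \leq \epsilon$ uniformly on $K$.

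For stage (ii), the map $g_\epsilon := P_\epsilon \circ f\big|_{K_\epsilon} : K_\epsilon \to K_\epsilon$ is continuous on a nonempty compact convex subset of finite-dimensional Euclidean space, so Brouwer's fixed point theorem furnishes a point $x_\epsilon \in K_\epsilon$ with $g_\epsilon(x_\epsilon) = x_\epsilon$. This approximate fixed point satisfies
$$\|f(x_\epsilon) - x_\epsilon\| \;=\; \|f(x_\epsilon) - P_\epsilon(f(x_\epsilon))\| \;\leq\; \epsilon.$$
Taking $\epsilon_n = 1/n$ produces a sequence $\{x_n\} \subseteq K$; compactness of $K$ allows extraction of a subsequence $x_{n_k} \to x^* \in K$. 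Continuity of $f$ gives $f(x_{n_k}) \to f(x^*)$, while $\|f(x_{n_k}) - x_{n_k}\| \leq 1/n_k \to 0$ forces $f(x^*) = x^*$.

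The main technical obstacle is stage (i): constructing the Schauder projection and verifying all three of its key properties (continuity, range contained in a finite-dimensional convex subset of $K$, and uniform $\epsilon$-approximation of the identity) in one compact step. Brouwer's theorem in finite dimensions is treated as a black box, so the substance of the Schauder extension lies precisely in this finite-dimensional approximation. Both hypotheses are genuinely required: compactness supplies the finite $\epsilon$-nets and the convergent subsequence, whereas convexity is used both to guarantee $K_\epsilon \subseteq K$ and to enable the application of Brouwer on $K_\epsilon$. Dropping either hypothesis causes the construction to fail, consistent with the standard counterexamples (e.g., a translation on an unbounded convex set, or the unit sphere in a Hilbert space under a rotation that has no fixed point).
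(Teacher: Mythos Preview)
Your proof is correct and follows the standard textbook route (Schauder projections onto finite $\epsilon$-nets, Brouwer on the resulting finite-dimensional simplex, then a compactness limit). Note, however, that the paper does not supply its own proof of this lemma: it is stated with citations to \cite{schauder1930fixpunktsatz} and \cite{shapiro2016fixed} and used as a black box in the proof of Lemma~\ref{lemma: fixed point}. So there is nothing in the paper to compare your argument against; you have simply filled in a classical result that the authors chose to cite rather than reprove.
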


Before we proceed, we introduce the following notation and intermediate result. Let $\mathcal M$ denote the space of finite signed Borel measures on $[0,M]$. Let $\mathrm{BL}([0,M])$ be the space of bounded Lipschitz functions $f: [0,M] \to\mathbb R$ endowed with the norm
\begin{equation*}
    \|f\|_{\mathrm{BL}}:=\|f\|_\infty+\mathrm{Lip}(f),   \qquad
\mathrm{Lip}(f):=\sup_{\substack{x,y\in [0,M]\\ x\neq y}}\frac{|f(x)-f(y)|}{|x - y|}. 
\end{equation*}
Then $BL([0,M])$ equipped with the bounded Lipschitz norm is a Banach space \citep{hille2009embedding}. Let $\mathcal{M}([0,M])$ be the space of finite signed measures on $[0,M]$, which we will often denote simply by $\mathcal{M}$. Define the linear map $J:\mathcal M([0,M])\to \mathrm{BL}([0,M])^*$ by
\begin{equation*}
    J[\mu](f):=\int f\,d\mu,\qquad f\in \mathrm{BL}([0,M]).    
\end{equation*}
For any $\nu \in \mathcal{M}$, define $\lVert \nu \rVert_{BL}$ to the norm on $\mathcal{M}$ induced by the operator norm:
\begin{equation*}
    \lVert \nu \rVert_{BL} = \lVert J[\nu] \rVert_{BL([0,M])^{\ast}} = \sup_{\lVert f \rVert_{BL} \leq 1} \left | \int f d \nu \right |
\end{equation*}
and the corresponding metric $d_{BL}$ as
\begin{equation*}
    d_{BL}(\nu, \nu') = \lVert \nu - \nu' \rVert_{BL}.
\end{equation*}

\begin{lemma}\label{lemma: injectivity embedding}
    $J$ is an injective isometric embedding of $\mathcal{M}$ to the dual $BL([0,M])^{\ast}$. 
\end{lemma}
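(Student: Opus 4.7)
The plan is to verify three properties in turn: linearity of $J$, well-definedness as a map into the dual $\mathrm{BL}([0,M])^{\ast}$, and injectivity. The isometry claim is then essentially tautological, since the norm $\|\cdot\|_{BL}$ on $\mathcal{M}$ is \emph{defined} to be the pullback of the operator norm under $J$.

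First, linearity of $J$ on $\mathcal{M}$ is immediate from the linearity of integration in the measure argument: for $\mu_1, \mu_2 \in \mathcal{M}$ and $\alpha, \beta \in \mathbb{R}$, $J[\alpha \mu_1 + \beta \mu_2](f) = \alpha \int f \, d\mu_1 + \beta \int f \, d\mu_2 = \alpha J[\mu_1](f) + \beta J[\mu_2](f)$. Next, to see that $J[\mu] \in \mathrm{BL}([0,M])^{\ast}$ for each $\mu \in \mathcal{M}$, note that each $f \in \mathrm{BL}([0,M])$ is Borel measurable, bounded by $\|f\|_{\infty} \leq \|f\|_{BL}$, and hence integrable with respect to the finite signed measure $\mu$. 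The map $f \mapsto \int f \, d\mu$ is linear, and the bound
\begin{equation*}
|J[\mu](f)| = \left| \int f \, d\mu \right| \leq \|f\|_{\infty} \, |\mu|([0,M]) \leq \|f\|_{BL} \, |\mu|([0,M])
\end{equation*}
shows that $J[\mu]$ is continuous with operator norm at most the total variation $|\mu|([0,M]) < \infty$.

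For injectivity, suppose $J[\mu] = 0$, i.e., $\int f \, d\mu = 0$ for every $f \in \mathrm{BL}([0,M])$. The plan is to bootstrap this to all continuous functions on $[0,M]$ and then invoke the Riesz representation theorem. The key intermediate step is to show that $\mathrm{BL}([0,M])$ is uniformly dense in $C([0,M])$; on a compact interval this is standard, for example by approximating any continuous $f$ by piecewise-linear interpolants on a fine partition, each of which is Lipschitz and bounded, with uniform error tending to zero by uniform continuity of $f$ on the compact $[0,M]$. Given this density, for any $g \in C([0,M])$ choose a sequence $f_k \in \mathrm{BL}([0,M])$ with $\|f_k - g\|_{\infty} \to 0$. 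Then $|\int g \, d\mu - \int f_k \, d\mu| \leq \|g - f_k\|_{\infty} |\mu|([0,M]) \to 0$, so $\int g \, d\mu = 0$ for all $g \in C([0,M])$. By the Riesz representation theorem for finite signed Borel measures on a compact metric space (or equivalently by uniqueness in the Riesz--Markov theorem), the zero functional on $C([0,M])$ corresponds to the zero measure, so $\mu = 0$. This establishes injectivity.

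Finally, the isometry property $\|J[\nu]\|_{\mathrm{BL}([0,M])^{\ast}} = \|\nu\|_{BL}$ is immediate from the definition $\|\nu\|_{BL} := \|J[\nu]\|_{\mathrm{BL}([0,M])^{\ast}}$ given in the preamble. The only potential subtlety is that for $\|\cdot\|_{BL}$ to be a genuine norm on $\mathcal{M}$ rather than merely a seminorm, one needs $\|\nu\|_{BL} = 0 \Rightarrow \nu = 0$, which is exactly the injectivity of $J$ just established. The main obstacle in the argument is the density of $\mathrm{BL}([0,M])$ in $C([0,M])$ together with the uniqueness part of Riesz--Markov; both are standard facts on compact metric spaces, so no serious technical difficulty arises.
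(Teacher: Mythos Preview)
Your proposal is correct and follows essentially the same approach as the paper: the isometry is observed to be definitional, and injectivity is established by approximating an arbitrary $g\in C([0,M])$ uniformly by piecewise-linear (hence bounded Lipschitz) interpolants and then invoking Riesz--Markov. The only difference is cosmetic---you additionally spell out linearity and boundedness of $J[\mu]$, which the paper takes for granted.
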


\begin{proof}
Note that the isometric property follows by construction of the norm. Next, we show $J$ is injective, that is, if $J[\mu]=0$ in $\mathrm{BL}([0,M])^*$, then $\mu=0$ as a signed measure.
Assume $J[\mu]=0$. We claim that this implies
\begin{equation*}
    \int g\,d\mu=0 \qquad \forall g\in C([0,M]).    
\end{equation*}
Fix any $g\in C([0,M])$ and $\varepsilon>0$. Since $[0,M]$ is compact, $g$ is uniformly continuous. Hence there exists $\delta>0$ such that
\begin{equation*}
    |x-y|<\delta \ \implies |g(x)-g(y)|<\varepsilon.    
\end{equation*}
Choose a partition $0=x_0<x_1<\dots<x_m=M$ with mesh $\max_i (x_i-x_{i-1})<\delta$ and define $f_\varepsilon$ as the piecewise-linear interpolant of $g$ on this partition:
\begin{equation*}
    f_\varepsilon(x_i):=g(x_i)\quad\text{for }i=0,\dots,m,
\quad \text{and } f_\varepsilon \text{ linear on each } [x_{i-1},x_i].    
\end{equation*}
Then $f_\varepsilon$ is Lipschitz, hence $f_\varepsilon\in \mathrm{BL}([0,M])$, and, by construction and uniform continuity of $g$,
\begin{equation*}
    \lVert f_\varepsilon-g \rVert_\infty \le \varepsilon.    
\end{equation*}
Because $f_\varepsilon \in \mathrm{BL}([0,M])$ and $J[\mu]=0$, we have $\int f_\varepsilon\,d\mu=0$. Therefore
\begin{equation*}
    \left|\int g\,d\mu\right|
=\left|\int (g-f_\varepsilon)\,d\mu\right|
\le \|g-f_\varepsilon\|_\infty\,\|\mu\|_{\mathrm{TV}}
\le \varepsilon\,\|\mu\|_{\mathrm{TV}},    
\end{equation*}
where $\|\mu\|_{\mathrm{TV}}$ is the total variation norm of $\mu$ (finite since $\mu$ is a finite signed measure). Since $\varepsilon>0$ was arbitrary, it follows that $\int g\,d\mu=0$ for all $g\in C([0,M])$.

Finally, by the Riesz-Markov representation theorem, a finite signed Borel measure on a compact Hausdorff space is uniquely determined by its integrals against $C([0,M])$ \citep{aliprantis2006infinite}. Hence $\int g\,d\mu=0$ for all $g\in C([0,M])$ implies $\mu=0$.
\end{proof}

\begin{lemma}\label{lemma: compact}
    Let $\mathcal{E} = J(\mathcal{M}_1)$ be the image of $\mathcal{M}_1$ under $J$. Then $\mathcal{E}$ is compact in the norm topology.
\end{lemma}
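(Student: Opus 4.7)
The plan is to show that $J$ restricted to $\mathcal{M}_1$ is a homeomorphism from $(\mathcal{M}_1, \tau_w)$, where $\tau_w$ denotes the topology of weak convergence of probability measures, onto $\mathcal{E}$ endowed with the norm topology inherited from $BL([0,M])^*$, and then to invoke compactness of $\mathcal{M}_1$ in $\tau_w$.

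First, I would appeal to a classical theorem of Dudley: on a separable metric space, the bounded Lipschitz metric $d_{BL}$ metrizes weak convergence on the space of probability measures. Since $[0,M]$ is compact, hence separable, a net $\mu_\alpha$ in $\mathcal{M}_1$ converges weakly to $\mu \in \mathcal{M}_1$ if and only if $d_{BL}(\mu_\alpha, \mu) \to 0$. Combined with Lemma \ref{lemma: injectivity embedding}, which asserts that $J$ is an isometry onto its image, this gives that $J$ restricted to $\mathcal{M}_1$ is a homeomorphism between $(\mathcal{M}_1, \tau_w)$ and $(\mathcal{E}, \lVert \cdot \rVert_{BL([0,M])^*})$.

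Second, I would establish that $\mathcal{M}_1$ is compact in $\tau_w$. Since $[0,M]$ is compact, every family of probability measures on it is automatically tight, so by Prokhorov's theorem $\mathcal{M}_1$ is relatively compact in $\tau_w$; weak limits of probability measures on a compact set remain probability measures, so $\mathcal{M}_1$ is also closed, hence compact. Alternatively, via the Riesz--Markov representation identifying $\mathcal{M}([0,M])$ with $C([0,M])^*$, the set $\mathcal{M}_1$ is a weak-$\ast$ closed subset of the unit ball, and compactness follows directly from Banach--Alaoglu.

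Finally, since compactness is a topological invariant, the homeomorphism from the first step transports compactness of $(\mathcal{M}_1, \tau_w)$ to $\mathcal{E}$ under the norm topology, giving the result. The only nontrivial ingredient is the equivalence of $d_{BL}$-convergence and weak convergence on $\mathcal{M}_1$, which is precisely Dudley's theorem; all other pieces are standard applications of Prokhorov and Banach--Alaoglu.
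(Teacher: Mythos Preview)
Your proposal is correct and follows essentially the same approach as the paper: both rely on Dudley's theorem that $d_{BL}$ metrizes weak convergence, the isometry of $J$ from Lemma~\ref{lemma: injectivity embedding}, and Prokhorov's theorem to obtain compactness of $\mathcal{M}_1$. The only cosmetic difference is that you package the argument as ``$J$ is a homeomorphism, so transport compactness,'' whereas the paper works directly with sequences to verify sequential compactness of $\mathcal{E}$.
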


\begin{proof}
    Since $BL([0,M])^*$ endowed with the operator norm is a metric space, it suffices to prove sequential compactness of $\mathcal E$. By the isometry of $J$ (Lemma \ref{lemma: injectivity embedding}), $\mathcal E$ is (sequentially) compact if and only if $(\mathcal P,d_{\mathrm{BL}})$ is (sequentially) compact.

    Let $\{\nu_n\}_n \subset \mathcal{M}_1$ be an arbitrary sequence. Since $[0,M]$ is compact, $\{\nu_n\}_n$ is tight. By Prokhorov's theorem, there exist a subsequence $\{\nu_{n_k}\}_{k\in\mathbb N}$ and $\nu\in\mathcal P$ such that $\nu_{n_k}$ converges to $\nu$ weakly. Since weak convergence is metrized by the bounded Lipschitz metric \cite{dudley2018real}, by isometry,
    \begin{equation*}
    \lVert J(\mu_{n_k})-J(\mu) \rVert_{BL([0,M])^{\ast}}
    = d_{BL}(\mu_{n_k},\mu)\to 0.
    \end{equation*}
    Thus every sequence in $\mathcal E$ has a norm-convergent subsequence with limit in $\mathcal E$, so $\mathcal E$ is sequentially compact. Because the norm topology is metrizable, sequential compactness implies compactness.
\end{proof}

\begin{lemma}[Existence of a fixed point]\label{lemma: fixed point}
    Let $M > 0$ and $n \in \mathbb{N}$, and let $\mathcal{M}_1$ be the set of Borel probability measures on $[0,M]$, which we endow with the topology of weak convergence. Suppose $T: \mathcal{M}_1^n \to \mathcal{M}_1^n$ is continuous. Then $T$ has a fixed point.
\end{lemma}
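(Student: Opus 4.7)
The plan is to apply the Schauder fixed point theorem (Lemma \ref{lemma: schauder}) after transporting the problem from the space of probability measures, equipped only with a topology of weak convergence, into a genuine normed linear space where Schauder applies. The vehicle for this is the injective isometric embedding $J: \mathcal{M} \to BL([0,M])^{\ast}$ from Lemma \ref{lemma: injectivity embedding}, together with the fact (Lemma \ref{lemma: compact}) that $\mathcal{E} := J(\mathcal{M}_1)$ is compact in the operator-norm topology.

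First, I would extend the setup to the $n$-fold product. Endow $(BL([0,M])^{\ast})^n$ with any product norm, for instance $\lVert (\nu_1,\dots,\nu_n)\rVert = \sum_i \lVert \nu_i \rVert_{BL([0,M])^{\ast}}$, under which it is a normed linear space. Define the product embedding $\bm{J}: \mathcal{M}^n \to (BL([0,M])^{\ast})^n$ by $\bm{J}(\mu_1,\dots,\mu_n) = (J\mu_1,\dots,J\mu_n)$, and set $\bm{\mathcal{E}} := \bm{J}(\mathcal{M}_1^n) = \mathcal{E}^n$. This set is nonempty, convex (because $\mathcal{M}_1$ is convex and $J$ is linear, so $\mathcal{E}$ is convex, and a product of convex sets is convex), and compact (a finite product of compact sets in a product of metrizable spaces is compact).

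Next, I would transfer the operator. Since $\bm{J}$ is an injective isometry from $\mathcal{M}_1^n$ onto $\bm{\mathcal{E}}$, it admits an inverse $\bm{J}^{-1}: \bm{\mathcal{E}} \to \mathcal{M}_1^n$. Define $\tilde T: \bm{\mathcal{E}} \to \bm{\mathcal{E}}$ by $\tilde T = \bm{J}\circ T \circ \bm{J}^{-1}$. The key point to check is continuity of $\tilde T$ in the norm topology inherited from $(BL([0,M])^{\ast})^n$. Because the bounded Lipschitz metric metrizes weak convergence on $\mathcal{M}_1$ and $J$ is an isometry between $(\mathcal{M}, d_{BL})$ and $(J(\mathcal{M}), \lVert\cdot\rVert_{BL([0,M])^{\ast}})$, norm convergence in $\bm{\mathcal{E}}$ is equivalent to weak convergence componentwise in $\mathcal{M}_1^n$. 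The assumed weak-continuity of $T$ therefore pulls back, through the homeomorphism $\bm{J}$, to norm-continuity of $\tilde T$.

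With $\bm{\mathcal{E}}$ a nonempty, compact, convex subset of the normed linear space $(BL([0,M])^{\ast})^n$ and $\tilde T: \bm{\mathcal{E}} \to \bm{\mathcal{E}}$ continuous, Schauder (Lemma \ref{lemma: schauder}) yields $\tilde \mu^{\ast} \in \bm{\mathcal{E}}$ with $\tilde T(\tilde\mu^{\ast}) = \tilde\mu^{\ast}$. Setting $\mu^{\ast} := \bm{J}^{-1}(\tilde\mu^{\ast}) \in \mathcal{M}_1^n$ and applying $\bm{J}^{-1}$ to both sides of $\bm{J}\circ T(\mu^{\ast}) = \bm{J}(\mu^{\ast})$ gives $T(\mu^{\ast}) = \mu^{\ast}$, completing the argument. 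The main obstacle, in my view, is the compatibility check between the two topologies: one must know that the bounded Lipschitz metric metrizes weak convergence on $\mathcal{M}_1$ (the standard Portmanteau-type fact invoked already in the proof of Lemma \ref{lemma: compact}) so that the assumed weak-continuity of $T$ transfers to norm-continuity of $\tilde T$; everything else is routine topology on finite products.
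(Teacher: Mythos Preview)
Your proposal is correct and follows essentially the same approach as the paper: embed $\mathcal{M}_1^n$ coordinatewise via $J$ into $(BL([0,M])^{\ast})^n$, verify that the image $\mathcal{E}^n$ is nonempty, convex, and compact, conjugate $T$ through the homeomorphism to obtain a norm-continuous self-map, and apply Schauder. Your treatment of the continuity step (explicitly invoking that $d_{BL}$ metrizes weak convergence so that the isometry is in fact a homeomorphism between the weak and norm topologies) is actually a bit more explicit than the paper's, but the argument is the same.
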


\begin{proof} 
    We embed $\mathcal{M}_1$ into the Banach space $BL([0,M])^{\ast}$ using the injective isometry $J$ by Lemma \ref{lemma: injectivity embedding}. By Lemma \ref{lemma: compact}, the image is a compact subset of $BL([0,M])^{\ast}$. It is also straightforward to see it is convex by convexity of $\mathcal{M}_1$ and linearity of $J[\nu]$.

    Let $J^{(n)}:\mathcal{M}_1^n\to \mathcal{E}^n$ be the coordinate-wise map
    \begin{equation*}
    J^{(n)}(\nu_1,\dots,\nu_n)\equiv (J\nu_1,\dots,J\nu_n).
    \end{equation*}
    Because $J$ is a homeomorphism $\mathcal{M}_1\to\mathcal{E}$ (as it is an isometry), the product map $J^{(n)}$ is a homeomorphism $\mathcal{M}_1^n\to\mathcal{E}^n$ with the product topologies. Define
    \begin{equation*}
    f \equiv J^{(n)} \circ T \circ \bigl(J^{(n)}\bigr)^{-1} : \mathcal{E}^n \to \mathcal{E}^n.
    \end{equation*}
    Since $T$ is continuous and $J^{(n)}$ is a homeomorphism, $f$ is continuous with respect to the norm topology on $\mathcal{E}^n$.

    By Lemma \ref{lemma: fixed point}, the continuous map $f:\mathcal{E}^n\to\mathcal{E}^n$ has a fixed point $e^\ast\in \mathcal{E}^n$. Then $\nu^\ast \equiv \bigl(J^{(n)}\bigr)^{-1}(e^\ast)\in \mathcal{M}_1^n$ is a fixed point of $T$.
\end{proof}

\section{Single-item Auctions with Negative Payments}\label{sec:eso_futo}

Throughout the paper we require that payments are non-negative: $\pbic_i(\bv) \geq 0$, i.e., transfers are only allowed from the bidders to the auctioneer. If we drop this restriction but still enforce interim IR., Es\"{o} and Fut\'{o} \cite{eso1999auction} show that it is possible to construct an auction where the variance of the revenue is zero whenever the values are independent. We provide below this result and the proof for completeness.

\begin{theorem}[\cite{eso1999auction}]\label{thm:negative_payments} For any product distribution over the buyers' private valuations, and for any implementable allocation function $\babic(\bv)$, there exists a payment rule $\bpbic(\bv)$ with possibly negative payments, that is interim IR, BIC and the variance of the total revenue $R = \sum_i \pbic_i(\bv)$ is zero.
\end{theorem}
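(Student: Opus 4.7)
\noindent \textbf{Proof plan for Theorem~\ref{thm:negative_payments}.} The strategy is to start from the all-pay payment rule $\pbic_i^{\AP}(\bv) = z_i(v_i)$, which by Lemma~\ref{lemma:all-pay} already satisfies BIC and interim IR (indeed, $\E[v_i \abic_i(\bv) - z_i(v_i) \mid v_i] = \int_0^{v_i} x_i(u)\,du \geq 0$ by the Myerson integral), and then add suitable zero-mean correction terms that absorb all of the cross-agent variability. Since we no longer impose $\pbic_i \geq 0$, we have enough flexibility to redistribute the excess $\sum_j z_j(v_j) - \bar R$ across agents, where $\bar R = \sum_j \E[z_j(v_j)]$ is the RET-pinned expected revenue.

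Concretely, I would propose the payment rule
\begin{equation*}
\pbic_i(\bv) \;=\; z_i(v_i) \;+\; \frac{1}{n-1}\sum_{j \neq i}\bigl(\E[z_j(v_j)] - z_j(v_j)\bigr).
\end{equation*}
The verification proceeds in three short steps. First, to check BIC, I would observe that for each $i$ and each $v_i$, independence of values gives $\E[z_j(v_j) \mid v_i] = \E[z_j(v_j)]$ for every $j \neq i$, so every correction term has conditional mean zero given $v_i$; hence $\E[\pbic_i(\bv) \mid v_i] = z_i(v_i)$, which by \eqref{eq:ret} is equivalent to BIC given the fixed allocation $\babic(\bv)$. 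Second, to check zero variance, I would sum the payments: each deviation $\E[z_j(v_j)] - z_j(v_j)$ appears in exactly $n-1$ of the summands $\pbic_i$ for $i \neq j$, each time with weight $1/(n-1)$, so
\begin{equation*}
\sum_i \pbic_i(\bv) \;=\; \sum_i z_i(v_i) + \sum_j \bigl(\E[z_j(v_j)] - z_j(v_j)\bigr) \;=\; \bar R,
\end{equation*}
which is constant in $\bv$, giving $\Var(R) = 0$. Third, interim IR follows directly from BIC together with the Myerson identity, exactly as in the all-pay case: $\E[v_i \abic_i(\bv) - \pbic_i(\bv) \mid v_i] = v_i x_i(v_i) - z_i(v_i) = \int_0^{v_i} x_i(u)\,du \geq 0$.

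Since there are no steps that require non-negativity, the only genuine "cost" of the construction is that the correction terms may push some $\pbic_i(\bv)$ below zero, consistent with the relaxation in the statement. I do not anticipate a serious obstacle; the only subtlety is the independence assumption, which is used exactly once, to ensure that the corrections depending on $\bv_{-i}$ have conditional mean zero given $v_i$. Without independence, one would need a different decomposition of $\sum_j z_j(v_j) - \bar R$ into per-agent components each conditionally mean-zero in its own coordinate, and such a decomposition need not exist.
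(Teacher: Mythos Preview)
Your proposal is correct and follows essentially the same construction as the paper's proof: both define $\pbic_i(\bv) = z_i(v_i) + \frac{1}{n-1}\sum_{j\neq i}(\bar z_j - z_j(v_j))$, verify $\E[\pbic_i(\bv)\mid v_i]=z_i(v_i)$ via independence, and sum to obtain constant revenue $\bar R$. If anything, your write-up is slightly more explicit than the paper's, since you spell out the interim-IR check and flag exactly where independence is used.
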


\begin{proof}
For each bidder $i$, let $z_i(v_i)$ be the interim payments in equation \eqref{eq:ret}. Also define $\bar z_i = \E[z_i(v_i)]$ as the average payment of bidder $i$. Now, define:
$$\textstyle  \pbic_i(\bv) = z_i(v_i) + \frac{1}{n-1} \sum_{j \neq i} (\bar z_j - z_j(v_j))$$
We first observe that $\E[\pbic_i(\bv) \mid v_i] = z_i(v_i)$, so the payment rule implements the allocation rule $a(v)$. Now, observe that:
$$\textstyle \sum_i \pbic_i(\bv) = \sum_i (z_i(\bv) + \bar z_i - z_i(\bv)) = \sum_i \bar z_i$$
Hence the total revenue is constant across all vectors of types.
\end{proof}

\section{Proof of Theorem \ref{thm:general_n_k}}\label{sec: appendix proof multi-item interim IR}

\begin{lemma}\label{lemma: derivative interim transfer}
    For any i.i.d. environment with $2 \leq k < n$, the interim payment rule under the efficient allocation must satisfy
    \begin{equation*}
        z'(v) = v \binom{n-1}{k-1} (n-k) F(v)^{n-k-1} (1-F(v))^{k-1} f(v).
    \end{equation*}
\end{lemma}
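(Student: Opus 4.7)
The plan is to compute $z'(v)$ directly from the Revenue Equivalence formula and then recognize $x'(v)$ as the density of an order statistic.

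First, I would differentiate the RET identity $z(v) = v\,x(v) - \int_0^v x(u)\,du$ from \eqref{eq:ret}. By Leibniz's rule, the derivative of $\int_0^v x(u)\,du$ is simply $x(v)$, and differentiating $v\,x(v)$ gives $x(v) + v\,x'(v)$. The two copies of $x(v)$ cancel, leaving the clean identity
\begin{equation*}
    z'(v) = v\,x'(v).
\end{equation*}
So the problem reduces to computing $x'(v)$ for the i.i.d.\ efficient allocation.

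Next, I would identify $x(v)$ explicitly. Under the efficient allocation, a bidder of value $v$ wins one of the $k$ items if and only if at most $k-1$ of the other $n-1$ bidders have strictly higher values; that is, $x(v) = \Pr(v_{(k)}^{-i} \leq v)$, where $v_{(k)}^{-i}$ denotes the $k$-th highest value among the other $n-1$ bidders. Equivalently, $x(v)$ is the CDF of the $(n-k)$-th smallest order statistic of $n-1$ i.i.d.\ draws from $F$. Therefore $x'(v)$ is the density of that order statistic, and by the standard order-statistic density formula,
\begin{equation*}
    x'(v) = \frac{(n-1)!}{(k-1)!\,(n-k-1)!}\,F(v)^{n-k-1}\,(1-F(v))^{k-1}\,f(v).
\end{equation*}
Rewriting the multinomial coefficient as $(n-k)\binom{n-1}{k-1}$ and multiplying by $v$ yields exactly the claimed expression for $z'(v)$.

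There is no real obstacle here beyond bookkeeping: the only thing to be careful about is the combinatorial coefficient when translating between ``$k$-th highest out of $n-1$'' and the standard increasing-order formulation. As a sanity check, one can verify the formula in the single-item case $k=1$, where it collapses to $z'(v) = v\,(n-1)F(v)^{n-2}f(v)$, which matches the well-known derivative of the interim payment in a standard first-price auction with $n$ symmetric bidders.
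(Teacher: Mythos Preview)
Your proposal is correct and follows essentially the same approach as the paper: both differentiate the RET formula to obtain $z'(v)=v\,x'(v)$ and then identify $x'(v)$ as the density of the relevant order statistic among the other $n-1$ bidders. The only cosmetic difference is that the paper first writes $x(v)$ as a Binomial tail sum and passes through the incomplete Beta function identity before differentiating, whereas you invoke the standard order-statistic density formula directly; the resulting coefficient $\frac{(n-1)!}{(k-1)!(n-k-1)!}=(n-k)\binom{n-1}{k-1}$ is the same either way.
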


\begin{proof}
    In the efficient allocation, the interim allocation probability $x(v)$ is the probability that a  bidder's value is among the top $k$, meaning at least $n-k$ of the remaining $n-1$ bidders have values below $v$. This is the upper tail of a Binomial distribution with success probability $F(v)$:
$$ x(v) = \sum_{j=n-k}^{n-1} \binom{n-1}{j} F(v)^j (1-F(v))^{n-1-j} $$
Using the probabilistic identity that relates the Binomial cumulative mass function to the regularized incomplete Beta function \cite{abramowitz1965handbook} we get:
$$ x(v) = (n-k) \binom{n-1}{n-k} \int_0^{F(v)} t^{n-k-1} (1-t)^{k-1} dt $$
which has marginal:
$$ x'(v) = (n-k) \binom{n-1}{k-1} F(v)^{n-k-1} (1-F(v))^{k-1} f(v) $$
which corresponds to the probability density that the bidder is tied for the $k$-th highest value.

Together with Myerson's formula $z'(v) = v x'(v)$, this yields
\begin{equation*}
    z'(v) = v x'(v) = v \binom{n-1}{k-1} (n-k) F(v)^{n-k-1} (1-F(v))^{k-1} f(v).
\end{equation*}
\end{proof}

\begin{lemma}\label{lemma: psi 0}
    Let $\psi$ be as in \eqref{eq: solution ODE}. Then $\lim_{v \downarrow 0} \psi(v) = 0$.
\end{lemma}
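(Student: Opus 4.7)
The plan is to apply L'Hôpital's rule to the expression $\psi(v) = N(v)/D(v)$ from \eqref{eq: solution ODE}. Both numerator and denominator vanish at $v = 0$: the numerator because it is an integral from $0$ to $v$ of a function that is bounded near $0$, and the denominator because $F(0) = 0$ raised to the positive power $n - k \geq 1$. Hence the ratio is of indeterminate form $0/0$. Since $N$ is differentiable via the fundamental theorem of calculus and $D$ is differentiable via the product rule, L'Hôpital's rule applies provided the limit of $N'/D'$ exists.

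First I would compute $N'(v) = [z'(v) - \bar{R} G(v)](1 - F(v))$. Substituting the formula for $z'(v)$ from Lemma \ref{lemma: derivative interim transfer} and the closed form $G(v) = \binom{n-1}{k-1}(1-F(v))^{k-2} F(v)^{n-k} f(v)$, the difference $z'(v) - \bar R G(v)$ factors as
$$\binom{n-1}{k-1} F(v)^{n-k-1}(1-F(v))^{k-2} f(v)\bigl[v(n-k)(1-F(v)) - \bar R F(v)\bigr],$$
and multiplying through by $(1-F(v))$ yields $N'(v)$. Next, differentiating $D(v) = \binom{n-1}{k-1} F(v)^{n-k}(1-F(v))^k$ via the product rule gives
$$D'(v) = \binom{n-1}{k-1} F(v)^{n-k-1} (1-F(v))^{k-1} f(v) \bigl[(n-k) - nF(v)\bigr].$$
The common factor $\binom{n-1}{k-1} F(v)^{n-k-1} (1-F(v))^{k-1} f(v)$ then cancels, leaving the clean expression
$$\frac{N'(v)}{D'(v)} = \frac{v(n-k)(1-F(v)) - \bar R F(v)}{(n-k) - n F(v)}.$$

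Finally, taking the limit as $v \downarrow 0$, I would use $F(v) \to 0$ and $f(0) > 0$ (guaranteed by the full-support assumption on $F$) to conclude that the numerator tends to $0$ and the denominator tends to $n - k$, which is strictly positive under the standing hypothesis $k < n$. Hence the ratio $N'(v)/D'(v)$ tends to $0$, so by L'Hôpital's rule $\lim_{v \downarrow 0} \psi(v) = 0$. There is no substantive obstacle in this argument; it is essentially a bookkeeping exercise, and the only technical subtlety is verifying that the limit of the denominator of $N'/D'$ is nonzero, which is precisely why the assumption $k < n$ is invoked.
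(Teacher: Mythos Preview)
Your proposal is correct and follows essentially the same approach as the paper: both apply L'H\^opital's rule, factor $N'(v)$ and $D'(v)$ to expose the common factor $\binom{n-1}{k-1}F(v)^{n-k-1}(1-F(v))^{k-1}f(v)$, cancel it, and evaluate the resulting quotient at $v=0$. One incidental remark: your invocation of $f(0)>0$ is unnecessary, since the factor $f(v)$ appears in both $N'(v)$ and $D'(v)$ and cancels before the limit is taken.
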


\begin{proof}
    Since $N(0) = D(0) = 0$, we evaluate the limit using L'Hôpital's rule: $\lim_{v \downarrow 0} \psi(v) = \lim_{v \downarrow 0} \frac{N'(v)}{D'(v)}$.

    To compute the ratio of the derivatives, we first expand the numerator derivative $N'(v) = [z'(v) - \bar{R} G(v)](1 - F(v))$. Substituting the expression from Lemma \ref{lemma: derivative interim transfer} and $G(v)$ into the expression for $N'(v)$ and factoring out the common terms, we get:
$$ \begin{aligned} & N'(v) = \binom{n-1}{k-1} F(v)^{n-k-1} (1-F(v))^{k-1} f(v) [v(n-k)(1-F(v)) - \bar R F(v)] \\
& \begin{aligned} D'(v) &= \binom{n-1}{k-1} F(v)^{n-k-1} (1-F(v))^{k-1} f(v) [n-k - n F(v)] \end{aligned} \end{aligned}$$
For any $v > 0$, the common combinatorial coefficients and the density terms $F(v)^{n-k-1} (1-F(v))^{k-1} f(v)$ are non-zero and cancel out exactly in the ratio:
$$ \lim_{v \downarrow 0} \frac{N'(v)}{D'(v)} = \lim_{v \downarrow 0} \frac{v(n-k)(1-F(v)) - \bar R F(v)}{n-k - n F(v)} = \frac{0}{n-k} = 0.$$
\end{proof}

\section{Variance-Minimization in Multi-Item Settings with Ex-Post IR}\label{sec: appendix multi-item ex-post IR}

\subsection{Winner-Pays-Bid Is Not Optimal}\label{sec:wpb_not_optimal}

Consider $3$ bidders with unit-demand and i.i.d. uniform  valuations over $[0,1]$ and $2$ items for sale. The efficient allocation allocates to all but the lowest value agent leading to the following interim allocation rule: $x_i(v_i) = 1-(1-v_i)^2$ and corresponding interim payment rule given by the Myerson integral:
$z_i(v_i) =  \int_0^{v_i} x_i(v_i) - x_i(u) du = v_i^2 - \frac{2}{3} v_i^3$.

The first price auction, asks each bidder to submit a bid, selects the top two bidders and they pay their bids. In that auction, we can compute the bidding strategies using the Revenue Equivalence Theorem: $b_i(v_i) = z_i(v_i) / x_i(v_i)$.\footnote{Equivalently, we can consider the BIC implementation of those auctions where each bidder reports their value $v_i$ and they pay $b_i(v_i)$ if they are among the top two values.}

Hence, the second moment of the revenue is:
$$\E\left[\left(\sum_i P_i(\bv)\right)^2\right] = 6 \int_0^{1} \int_0^{v_1} \int_0^{v_2} (b(v_1) + b(v_2))^2 dv_3 dv_2 dv_1 \approx 0.26431$$

Consider the following more general payment rule, described as its BIC implementation. Each bidder reports their value $v_i$, the highest two values win the item, the highest bidder pays $b_1(v)$ and the second highest bidder pays  $b_2(v)$, where $v$ are the values respectively. This auction is BIC if it satisfies the Revenue Equivalence Theorem:
$$z_i(v_i) = v_i^2 b_1(v_i) + 2 v_i (1-v_i) b_2(v_i)$$

We can engineer functions $b_1(v_i)$ and $b_2(v_i)$ that make $b_1$ as large as possible and $b_2$ as small as possible without violating ex-post IR:

$$b_1(v_i) = \left\{\begin{aligned}
    & v_i, & & v_i \leq 0.6 \\
    & z(v_i)/v_i^2, & & v_i \geq 0.6
\end{aligned}\right. \qquad 
b_2(v_i) = \left\{\begin{aligned}
    & (z_i(v_i) - v_i^3) / (2v_i(1-v_i)), & & v_i \leq 0.6 \\
    & 0, & & v_i \geq 0.6
\end{aligned}\right.$$
It is straightforward to check that $b_1(v) \leq v$, $b_2(v) \leq v$ (hence it is ex-post IR) and that it satisfies the Revenue Equivalence Theorem. When we compute the second moment we obtain:

$$\E\left[\left(\sum_i P_i(\bv)\right)^2\right] = 6 \int_0^{1} \int_0^{v_1} \int_0^{v_2} (b_1(v_1) + b_2(v_2))^2 dv_3 dv_2 dv_1 \approx 0.2613$$

Observe that if the second highest bid is above $0.6$, the entire cost is being paid by the top bidder, therefore eliminating the cross terms in the second moment. In the remaining cases, we shift most of the payment to the highest bidder, also decreasing the effect of the cross terms. In this setting, it is not possible to completely shift all the payment to the highest bidder completely eliminating the cross terms.

\subsection{Variance-Minimizing Mechanism}\label{sec:ex_post_multi_charact}

While we are unable to characterize the optimal mechanism analytically even in this simple setting with three bidders and two items, we characterize its key properties and show how it can be computed numerically as the solution to a quadratic program. For expositional simplicity, we state the lemma below for three bidders and two items, although the analysis extends to arbitrary numbers of bidders and items. 

\begin{lemma}
\label{lemma:order_statistics}
Consider an i.i.d. environment with $k = 2$ and $n = 3$. Given any payment rule satisfying RET and ex-post IR, there is a mechanism with smaller or equal variance (or any convex risk measure of revenue) where the payments depend only on the highest and second-highest valuations, $v_{(1)}$ and $v_{(2)}$. Specifically, the payment of the highest bidder is a function $P_1(v_{(1)}, v_{(2)})$ and the payment of the second-highest bidder is a function $P_2(v_{(1)}, v_{(2)})$. The payments do not depend on the identity of the bidders nor on the valuation of the losing bidder $v_{(3)}$.
\end{lemma}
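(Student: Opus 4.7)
The argument has two successive reduction steps. Each preserves RET and ex-post IR, and each weakly reduces $\E[g(R)]$ for any convex $g$ through a Jensen-type inequality. The first step removes dependence on bidder identity; the second removes dependence on $v_{(3)}$.

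\textbf{Step 1: Symmetrize across bidder identities.} For each permutation $\sigma \in S_3$, let $\bv_\sigma$ denote the profile with $(\bv_\sigma)_i = v_{\sigma^{-1}(i)}$, and define
$$P_j^{\mathrm{sym}}(\bv) \;=\; \frac{1}{3!}\sum_{\sigma \in S_3} P_{\sigma(j)}(\bv_\sigma).$$
I would verify three properties. (i) $P^{\mathrm{sym}}$ is invariant under relabeling, so it can be written as $P_j^{\mathrm{sym}}(\bv) = Q_{r(j,\bv)}(v_{(1)}, v_{(2)}, v_{(3)})$ where $r(j,\bv)$ is the rank of bidder $j$; ex-post IR forces $Q_3 \equiv 0$. (ii) Each summand is bounded by $v_j$ whenever $j$ wins, so ex-post IR (and non-negativity) is preserved by averaging. (iii) The interim expected payment $\E[P_{\sigma(j)}(\bv_\sigma)\mid v_j=v]$ equals $z(v)$ for every $\sigma$, because conditional on $v_j=v$ the remaining coordinates of $\bv_\sigma$ are i.i.d.\ from $F$, so RET for the original mechanism applies; averaging over $\sigma$ gives RET for $P^{\mathrm{sym}}$. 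Finally, $R^{\mathrm{sym}}(\bv) = \tfrac{1}{3!}\sum_\sigma R(\bv_\sigma)$ and $R(\bv_\sigma) \stackrel{d}{=} R(\bv)$ under the i.i.d.\ prior, so Jensen's inequality yields $\E[g(R^{\mathrm{sym}})] \leq \E[g(R)]$.

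\textbf{Step 2: Integrate out the loser's valuation.} Define
$$\tilde{Q}_r(v_{(1)}, v_{(2)}) \;=\; \E\bigl[\, Q_r(v_{(1)}, v_{(2)}, v_{(3)}) \,\big|\, v_{(1)}, v_{(2)} \,\bigr], \qquad r = 1, 2,$$
and set $\tilde{Q}_3 = 0$, $\tilde{P}_j(\bv) = \tilde{Q}_{r(j,\bv)}(v_{(1)}, v_{(2)})$. Ex-post IR is preserved since $Q_r \leq v_{(r)}$ pointwise and conditional expectation is monotone. RET follows from the tower property: by exchangeability of i.i.d.\ values, the conditional law of $v_{(3)}$ given $(v_{(1)}, v_{(2)})$ coincides with the conditional law of $v_{(3)}$ given $(v_{(1)}, v_{(2)})$ together with the identity of the bidder holding any given rank, so
$$\E\bigl[\tilde{Q}_r(v_{(1)}, v_{(2)}) \,\big|\, v_j=v,\, r(j,\bv)=r\bigr] \;=\; \E\bigl[Q_r(v_{(1)}, v_{(2)}, v_{(3)}) \,\big|\, v_j=v,\, r(j,\bv)=r\bigr],$$
and summing over $r$ with the appropriate rank probabilities recovers $z(v)$. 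Finally, $\tilde{R}(v_{(1)}, v_{(2)}) = \tilde{Q}_1 + \tilde{Q}_2 = \E[R^{\mathrm{sym}} \mid v_{(1)}, v_{(2)}]$, so a second application of Jensen gives $\E[g(\tilde R)] \leq \E[g(R^{\mathrm{sym}})]$. Chaining the two inequalities yields the claim.

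\textbf{Main obstacle.} The delicate point is verifying that RET survives Step~2. Concretely, one must show that the conditional distribution of $v_{(3)}$ given $(v_{(1)}, v_{(2)})$ is unchanged by further conditioning on which bidder holds rank $r$---a consequence of exchangeability of the i.i.d.\ values, but one that needs to be stated carefully to ensure the tower property aligns the interim expected payments under $\tilde P$ and $P^{\mathrm{sym}}$.
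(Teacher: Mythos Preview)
Your proposal is correct and follows essentially the same approach as the paper: project the payments onto the $\sigma$-algebra generated by $(v_{(1)},v_{(2)})$ and apply Jensen's inequality to the revenue. The paper does this in a single step---defining $P_r(y_1,y_2)$ directly as the conditional expectation of the rank-$r$ winner's payment given $(v_{(1)},v_{(2)})=(y_1,y_2)$---whereas you decompose it into symmetrization over identities followed by integration of $v_{(3)}$; your two-step version is more explicit about why RET survives (the exchangeability point you flag as the main obstacle is exactly what the paper's terse ``preserve the conditional expectations by definition'' is relying on), but the underlying argument is the same.
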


\begin{proof}
Let $\pi(\bv)$ be an arbitrary payment rule for the mechanism, where $\pi_i(\bv)$ is the payment of bidder $i$ given valuation profile $\bv = (v_1, v_2, v_3)$.

Consider the projection of this payment rule onto the order statistics. Define the new payment functions $P_1$ and $P_2$ as the expected payment of the winner with rank 1 and rank 2, conditional on the realization of $v_{(1)}$ and $v_{(2)}$:
\begin{align*}
P_1(y_1, y_2) &= \mathbb{E}_{\bv} \left[ \sum_{i \in \text{Winners}} \pi_i(\bv) \cdot \mathbb{I}(v_i = y_1) \mid v_{(1)}=y_1, v_{(2)}=y_2 \right] \\
P_2(y_1, y_2) &= \mathbb{E}_{\bv} \left[ \sum_{i \in \text{Winners}} \pi_i(\bv) \cdot \mathbb{I}(v_i = y_2) \mid v_{(1)}=y_1, v_{(2)}=y_2 \right]
\end{align*}
Note that this expectation integrates out the valuation of the losing bidder, $v_{(3)}$, and averages over the identities of the bidders.

First we observe that it preserves the expected payment of each type and hence preserves RET and as a consequence BIC. This happens because the 
interim expected payment of a bidder with type $v$ depends only on the expected payment conditioned on their rank. Since $P_1$ and $P_2$ preserve the conditional expectations by definition, the new rule $P(\cdot)$ generates the exact same interim expected payments $z(v)$ as the original rule $\pi(\cdot)$.

Finally, the variance is reduced by Jensen's inequality.
Let $R(\bv) = \sum_i \pi_i(\bv)$ be the total revenue of the original mechanism. The total revenue of the new mechanism given order statistics $(v_{(1)}, v_{(2)})$ is $\bar{R}(v_{(1)}, v_{(2)}) = P_1(v_{(1)}, v_{(2)}) + P_2(v_{(1)}, v_{(2)})$.
By construction: $\bar{R}(v_{(1)}, v_{(2)}) = \mathbb{E}[R(\bv) \mid v_{(1)}, v_{(2)}]$.
The objective is to minimize $\mathbb{E}[R(\bv)^2]$. By Jensen's Inequality:
$$
\mathbb{E}[R(\bv)^2] = \mathbb{E}_{v_{(1)}, v_{(2)}} \left[ \mathbb{E}[R(\bv)^2 \mid v_{(1)}, v_{(2)}] \right] \ge \mathbb{E}_{v_{(1)}, v_{(2)}} \left[ (\mathbb{E}[R(\bv) \mid v_{(1)}, v_{(2)}])^2 \right] = \mathbb{E}[\bar{R}^2].
$$
\end{proof}

Using  Lemma \ref{lemma:order_statistics}, we formulate the variance minimization problem as the convex optimization program over the functions $P_1(v_1, v_2)$ and $P_2(v_1, v_2)$ defined on the triangular domain $\mathcal{T} = \{(v_1, v_2) \in [0,1]^2 : v_1 \ge v_2\}$. If the bidders have i.i.d. valuations with PDF $f(v)$ and CDF $F(v)$, then we can write the problem as:

$$\begin{aligned}
& \min_{P_1, P_2} \quad \iint_{\mathcal{T}} \left( P_1(v_1, v_2) + P_2(v_1, v_2) \right)^2 6 f(v_1) f(v_2) F(v_2) \, dv_2 \, dv_1 \\
&   \begin{aligned} & \text{s.t. } &
& \int_{0}^{v} P_1(v, u) 2 f(u) F(u) du + \int_{v}^{1} P_2(u, v) 2 F(v) f(u) du = z(v) \quad \forall v \in [0,1] \\
& & & 0 \leq P_1(v_1, v_2) \leq v_1, 0 \leq P_2(v_1, v_2) \leq v_2 
\end{aligned} \end{aligned}
$$
where the first set of constraints correspond to RET and the second to set to ex-post IR and non-negativity of payments.  While difficult to solve analytically, we discretize the uniform distribution and solve a discrete version of this program, which obtains $\E[\Rev^2] \approx 0.255$. The results are displayed in the figures below.

\begin{figure}[h]
\begin{center}
\includegraphics[scale=.25]{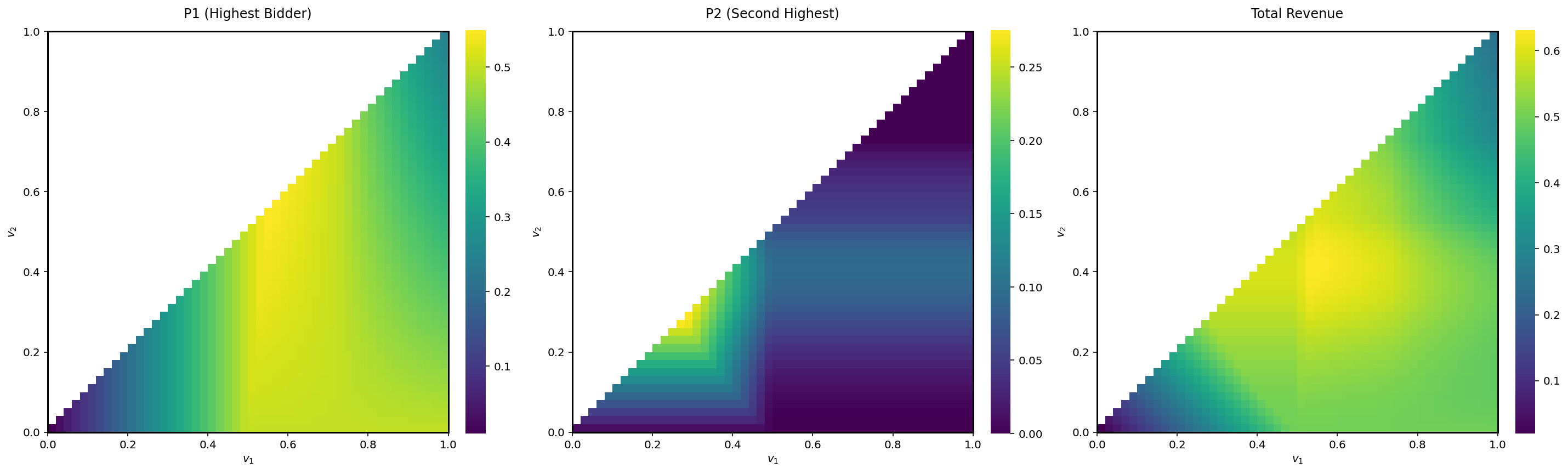}
\end{center}
\caption{Optimal payment rule}
\label{fig:opt-expost-ir}
\end{figure}

\begin{figure}[h]
\begin{center}
\includegraphics[scale=.25]{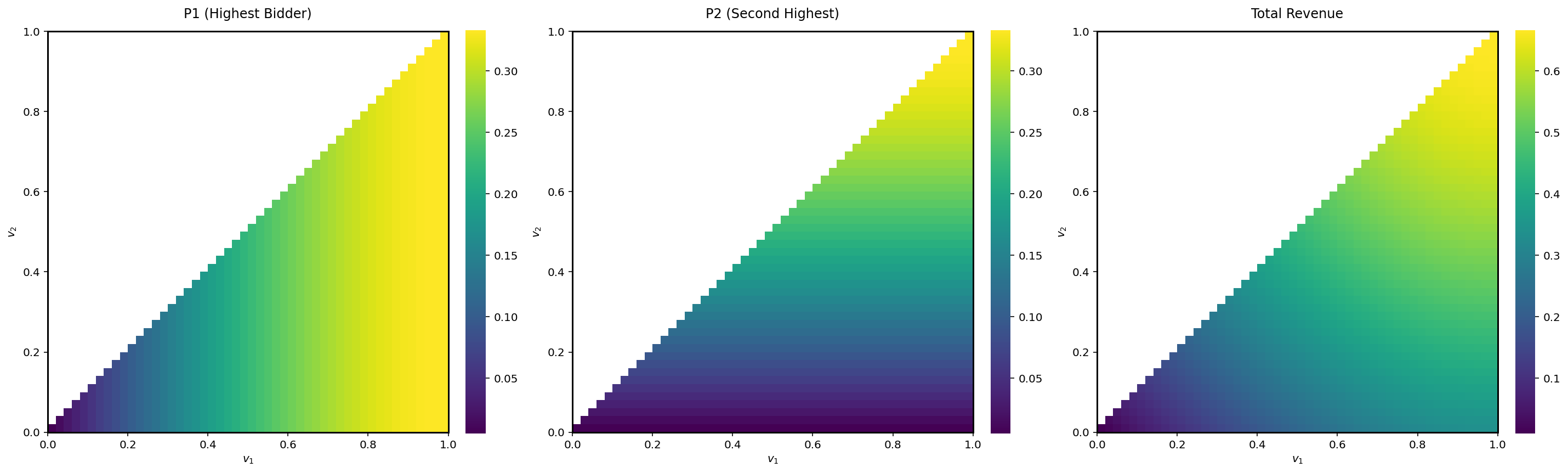}
\end{center}
\caption{First price payment rule}
\label{fig:first_price}
\end{figure}

\begin{figure}[h]
\begin{center}
\includegraphics[scale=.25]{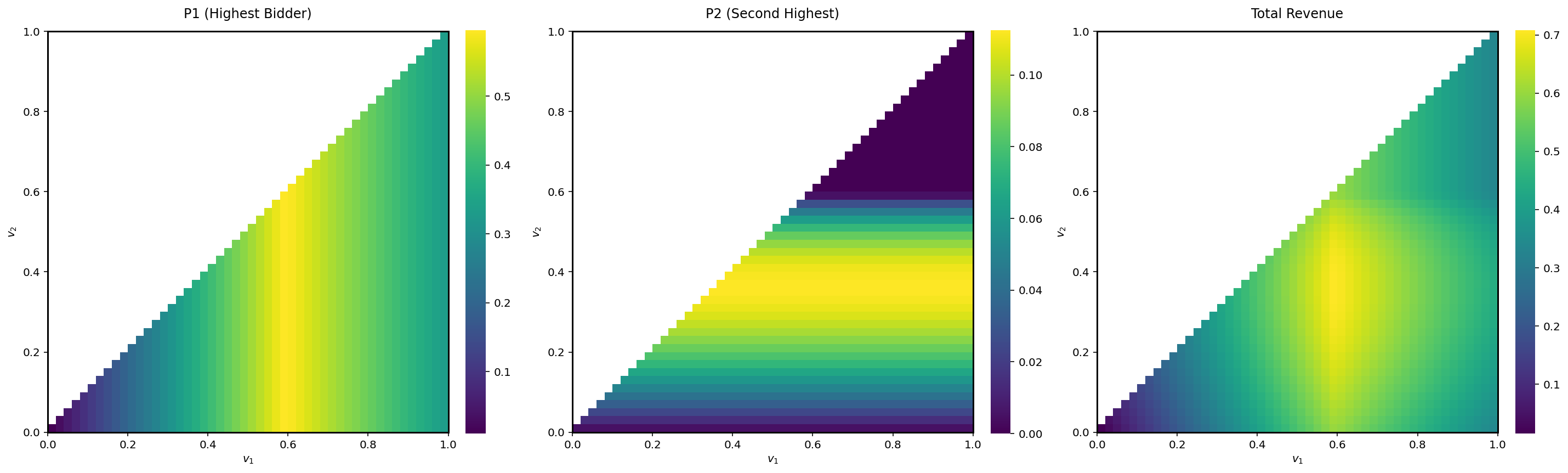}
\end{center}
\caption{Custom bidding function in Section \ref{sec:wpb_not_optimal}}
\label{fig:custom_bid}
\end{figure}

\end{document}